\documentclass[12pt, letter]{extarticle}

\usepackage[margin=1in]{geometry}
\setlength\parindent{24pt}
\setlength{\parskip}{5pt}

\usepackage{booktabs}
\usepackage{authblk}
\usepackage{amsmath,amsthm,amssymb,bm,amsfonts,dsfont}
\usepackage{graphicx, psfrag, epsf, subcaption, tikz}
\usepackage{lipsum}
\usepackage{enumerate}
\usepackage{natbib}
\usepackage{setspace}
\usepackage{textcomp}
\usepackage{adjustbox}
\usepackage{etoolbox}
\usepackage[title]{appendix}
\usepackage[hyphens]{url} 

\usepackage{hyperref}
\usepackage{xcolor} 
\definecolor{UWPink1}{HTML}{FFBEEF}
\definecolor{UWPink2}{HTML}{FF63AA}
\definecolor{UWPink3}{HTML}{DF2498}
\definecolor{UWPink4}{HTML}{C60078}
\definecolor{UWGray1}{HTML}{DFDFDF}
\definecolor{UWGray2}{HTML}{A2A2A2}
\definecolor{UWGray3}{HTML}{787878}
\definecolor{UWGray4}{HTML}{000000}
\definecolor{UWGold1}{HTML}{FFFFAA}
\definecolor{UWGold2}{HTML}{FFEA3D}
\definecolor{UWGold3}{HTML}{FFD54F}
\definecolor{UWGold4}{HTML}{E4B429}
\usepackage{lineno}
\usepackage{hyperref}
\usepackage{float} 
\usepackage[utf8]{inputenc}
\hypersetup{
	colorlinks = true,
	linkcolor = purple,
	citecolor = blue,
	filecolor = magenta, 
	urlcolor  = cyan
}
\usepackage{colonequals}

\newcommand{\bX}{\mathbf{X}}

\newcommand{\inp}[2]{\left\langle #1, #2 \right\rangle}
\newcommand{\norm}[1]{\left\|#1\right\| }
\newcommand{\abs}[1]{\left| #1 \right|}
\newcommand{\N}{\mathcal N}
\newcommand{\EXp}{\mathbb{E}}
\newcommand{\E}{\mathbb{E}}
\newcommand{\hmu}{\hat{\mu}}
\newcommand{\xih}{X_{i+h}}
\newcommand{\xihmu}{X_i-\hat{\mu} }
\newcommand{\xihhmu}{X_{i+h}-\hat{\mu} }

\newcommand{\nxi}{\left\lVert X_i \right\rVert}
\newcommand{\nxih}{\left\lVert X_{i+h}\right\rVert}
\newcommand{\nhmu}{\left\lVert \hat{\mu} \right\rVert}
\newcommand{\nxihmu}{\left\lVert X_i - \hat{\mu}  \right\rVert}
\newcommand{\nxihhmu}{\left\lVert X_{i+h} - \hat{\mu}  \right\rVert}

\newcommand{\san}{\sum_{\{i:(X_i,X_{i+h})\in A_n\}}}
\newcommand{\sanc}{\sum_{\{i:(X_i,X_{i+h})\in A_n^c\}}}

\newcommand{\HH}{\mathbb{H}}

\newcommand{\1}[1]{\mathds{1}_{#1}}

\newcommand{\id}{\overset{D}{\to}}

\newcommand{\ias}{\overset{\text{a.s.}}{\to}}


\newcommand{\hip}{\mathcal{H}^\prime}


\newtheorem{definition}{Definition}
\newtheorem{theorem}{Theorem}

\newtheorem{lemma}{Lemma}
\newtheorem{assumption}{Assumption}

\allowdisplaybreaks

\begin{document}
	
\def\spacingset#1{\renewcommand{\baselinestretch}%
	{#1}\small\normalsize} \spacingset{1}

{
	\title{\bf Functional Spherical Autocorrelation: A Robust Estimate of the Autocorrelation of a Functional Time Series}
	\author[1]{Chi-Kuang Yeh\thanks{chi-kuang.yeh@uwaterloo.ca}}
	\author[1]{Gregory Rice\thanks{grice@uwaterloo.ca}}
	\author[1,2]{Joel A. Dubin\thanks{jdubin@uwaterloo.ca}}
	\affil[1]{Department of Statistics and Actuarial Science, University of Waterloo, Canada}
	\affil[2]{School of Public Health Sciences, University of Waterloo, Canada}
	
	\setcounter{Maxaffil}{0}
	\renewcommand\Affilfont{\itshape\small}
	\maketitle
}

\begin{abstract}
	We propose a new autocorrelation measure for functional time series that we term ``spherical autocorrelation". It is based on measuring the average angle between lagged pairs of series after having been projected onto the unit sphere. This new measure enjoys several complimentary advantages compared to existing autocorrelation measures for functional data, since it both 1) describes a notion of ``sign" or ``direction" of serial dependence in the series, and 2) is more robust to outliers. The asymptotic properties of estimators of the spherical autocorrelation are established, and are used to construct confidence intervals and portmanteau white noise tests. These confidence intervals and tests are shown to be effective in simulation experiments, and demonstrated in applications to model selection for daily electricity price curves, and measuring the volatility in densely observed asset price data. 
\end{abstract}

\noindent%
{\it Keywords:} Functional data, serial dependence, forecasting, model diagnosis, robustness

\renewcommand{\baselinestretch}{0.5}\footnotesize
\spacingset{1.2} 
	
\section{Introduction}

Functional data analysis as a field has grown considerably over the past three decades, likely owing to increasing interest in analyzing high-dimensional data that arise from continuously observing processes over various domains, including time, space, and frequency. We refer the reader to \cite{HKbook} and \cite{ramsay:silverman:2005:FDA} for textbook length treatments of functional data analysis. In many cases of interest, functional data are collected sequentially over time. For example, continuous observations of the electricity price in a given region might be interpolated and fashioned into daily electricity price curves.  Such series of functional data objects are referred to as  \textit{functional time series}, and methods for analyzing and modelling functional time series have also been steadily developed in recent years; see \cite{bosq:2000},  \cite{hormann:kokoszka:2010}, and Chapter 8 of \cite{kokoszka:2017:FDA-book} for an overview of functional time series analysis. 

When analyzing, or evaluating a model for, any time series, one often begins by computing and plotting a measure of the sample autocorrelation of the series, along with corresponding confidence intervals computed assuming the series forms a strong white noise. This is usually done in order to evaluate the serial dependence structure of the series or model residuals in order to inform further modelling, or serve as validation that a given time series model appears to fit the data well.

In terms of estimating the serial dependence structure of a functional time series, available methods to date are based on measuring the magnitude of the lagged autocovariance operator of the series. In order to fix ideas, suppose that $\{X_i(t),\;\; i \in \mathbb{Z},\;\; t\in [0,1]\}$ is a sequence of stochastic processes whose sample paths are in $L^2[0,1]$, the space of real-valued square-integrable functions defined on the unit interval. For example, $X_i(t)$ might denote the electricity price in a certain region on day $i$ at time $t$, where $t$ in normalized to lie in $[0,1]$. Conceptually a functional time series can be thought of as an observed stretch of such a sequence of length $n$, $X_1,...,X_n$.  As defined in, for example, \cite{panaretos:tavakoli:2012}, when the series $\{X_i(t),\;\; i \in \mathbb{Z},\;\; t\in [0,1]\}$ is stationary, the {\it autocovariance kernel at lag $h$} of the series is

\[
	C_h(t, s)=\EXp{\left[X_{0}(t)-\mathbb{E} X_{0}(t)\right]\left[X_{h}(s)- \mathbb{E}X_{h}(s)\right]}.
\]
Using the standard $L^2$-norm to measure the magnitude of $C_h$, a {\it functional autocorrelation function} (fACF) may be defined as the mapping $h \mapsto \rho_h^{(F)}$, where 
\[
	\rho_{h}^{(\text{F})}=\frac{\left\|C_{h}\right\|_2}{\int C_{h}(t, t) d t}, \mbox{ and } 	\left\|C_{h}\right\|^{2}_2=\iint C_{h}^{2}(t, s) dtds.
\]
\cite{mestre:2021:facf} considered this definition of the fACF, and studied its use in performing model selection for functional SARIMAX models. \cite{kokoszka:2017:autocovariance-ftsa-conditional} also considered the properties of this fACF assuming the underlying functional time series is conditionally heteroscedastic. 

Although this fACF is useful in many situations, it has several drawbacks. Among them are that $\rho_h^{(F)}$  describes only the magnitude of $C_h$, and is always non-negative. Any information about the ``sign" or ``direction" of the correlation in the series is lost in computing $\rho_h^{(F)}$. Additionally, estimating  $\rho_h^{(F)}$ is highly prone to the influence of outliers. Although the definition of the fACF only presumes the existence of two moments of the underlying series, i.e. $\E \|X_i\|^2 < \infty$, in order to perform inference on $\rho_h^{(F)}$, one often requires four moments, $\E \|X_i\|^4 < \infty$. Worse still, in many cases of interest, for instance in the context of fitting volatility models to functional time series, one often wishes to estimate the autocorrelation structure of transformations of the original series. A popular transformation to consider in this setting is $X_i^2$, and in this case the implicit number of moments required in order for the usual confidence intervals for $\rho_h^{(F)}$ to be valid is   $\E \|X_i\|^8 < \infty$.

To address some of these shortcomings, we propose a new autocorrelation measure for functional time series that we dub the {\it spherical autocorrelation function}. It is constructed by computing the average angle, as encoded by the inner product of projections of the series onto the unit sphere in $L^2[0,1]$,  between lagged pairs in the series that have been centred by the spatial (geometric) median of the series. This measure both captures ``signed" serial dependence in the series, and is robust in the sense that it does not require high moment conditions to define and estimate. We show that this new autocorrelation function can be consistently estimated when the underlying series is stationary, and satisfies the central limit theorem assuming the series is a strong white noise.

The main theoretical challenge that we overcome in doing so lies in handling the effect of estimating the spatial median, since the natural estimator of the spherical autocorrelation is not easily linearized in terms of the spatial median, and existing spatial median estimators are defined in terms of iterative optimization procedures, and do not have closed-form expressions; see e.g. \cite{gervini:2008:robust-fpca} and \cite{cardot:2013}. We establish though that under primitive conditions, which are satisfied by existing functional spatial median estimators, the estimator that we propose for the spherical autocorrelation is still consistent and asymptotically Gaussian when the underlying series is a strong white noise. 

These results are used to construct asymptotically valid confidence intervals and portmanteau white noise tests for the spherical autocorrelation function. As confirmed in simulation experiments and two real data applications, these confidence intervals and tests appear to perform well in many settings, and are useful in exploring the dependence structure, and performing model selection, with functional time series.  

There are a number of papers related to this work that generally address the problems of estimating the correlation between functional data objects, and robust covariance estimation with functional data.  \cite{leurgans:1993:canonical} studied the canonical correlation between paired functional data, see also page 28 in \cite{ramsay:silverman:2005:FDA}, and \cite{boente:Kudraszow:2022:canonical} put forward a robust estimator for the canonical correlation of functional data. \cite{boente:2019:spatial} develop a robust, spherical covariance operator for functional data, which is also based on centring the data using the spatial median.

The rest of the paper is organized as follows. Section \ref{sec-method} gives a definition of the spherical autocorrelation and its natural estimator. That section also includes statements of the asymptotic properties of the estimator, as well as definitions of confidence intervals and portmanteau tests.  The practical implementation of the methods are discussed in Section \ref{sec-simulation}, as well as the results of several simulation experiments. Two  data applications are provided in Section \ref{sec-applicatipn}, and we close the paper with concluding remarks in Section \ref{sec-conclusion}. All proofs and some additional implementation details and simulation results are placed in an Appendix following these main sections.

\section{Problem statement and methodology}\label{sec-method}

We begin by introducing some notation that will be used throughout the paper. We consider a functional time series $\{X_i\}_{i \in \mathbb{Z}} := \{X_i(t): i\in\mathbb{Z}, \; t\in\mathcal{T}\}$, where $\mathcal{T}$ is a compact interval, from which we assume that we have observed a stretch of length $n$, $X_1,...,X_n$. Without loss of generality, we assume the argument $t$ is re-scaled so that  $\mathcal{T}=[0,1]$. Here each variable $X_i$ is viewed as a stochastic process whose sample paths lie in the Hilbert space $\mathbb{H}=L^2[0,1]$ of real valued square integrable functions equipped with the inner product  
\[
\inp{f}{g} = \int f(t)g(t)dt,
\]
and corresponding norm $\|\cdot \| = \sqrt{\langle \cdot , \cdot \rangle}$, where $\int=\int_0^1$. Although the methods and results below would hold equally well when $\mathbb{H}$ is any separable Hilbert space, we take $\mathbb{H}=L^2[0,1]$  since this both covers the applications that we consider, and also coincides with the space considered for many existing methods for robust estimation of the centre and covariance of functional data; see \cite{gervini:2008:robust-fpca} and \cite{boente:2019:spatial}. 

\subsection{Definition of spherical autocorrelation}
In the event that the series $\{X_i\}_{i\in Z}$ is strictly stationary, our main goal is to quantify the autocorrelation present in the series using the observations $X_1,...,X_n$. The method that we propose is based on examining the angle between suitably centred, lagged pairs of the series $X_i$ and $X_{i+h}$. This angle is encoded by the inner product of the projections of these pairs onto the unit sphere. We denote the projection of $x\in \mathbb{H}$ onto the unit sphere as $S(x) := x/\norm{x}$ for $x\ne 0$, and $S(0) = 0$. $S(\cdot)$ is often referred to as the \textit{spatial sign}. The centred and projected observations may  then be written as 
\begin{equation}
	S(X_i-\mu) = \frac{X_i  - \mu }{\norm{X_i-\mu}}, \quad i = 1,\cdots,n,
\end{equation}
where $\mu$ is a suitably defined ``centre" of the process $X_i$, which we discuss in detail below. Note that $S(x)$ is bounded as it is easy to see that $\|S(x)\|\le 1$. The following defines a new notion of autocorrelation for a functional time series.
\begin{definition}{\rm Suppose the series $\{X_i\}_{i\in Z}$ is strictly stationary. Given a fixed time lag $h$, we define the {\it spherical autocorrelation} at lag $h$ as
	\begin{equation}
		\rho_h^{\text{(S)}} = \rho_h = \EXp{\inp{S(X_0-\mu)}{S(X_h-\mu)}}.
	\end{equation}
	The {\it functional spherical autocorrelation function} (abbreviated fSACF) is the function  $h \mapsto \rho_h^{(S)}$ for $h\in \{0,1,...\}$. }
\end{definition} 

It is immediate from its definition that the fSACF shares a number of similar features with the classic autocorrelation function of a univariate time series: i)  $-1 \le \rho_h \le 1$ , ii) $\rho_0=1$, and iii) the fSACF is symmetric around the origin, i.e., $\rho_h = \rho_{-h}$. Another clear consequence of its definition is that $\rho_h$ is well-defined for any stationary process in $\mathbb{H}$. In particular, no moment conditions are required in defining the fSACF. 

In order to maintain its analogy to other classic notions of autocorrelation and aid in its interpretation, it would also be an appealing feature of the fSACF if when the variables are independent at lag $h$, i.e. when $X_0$ and $X_{h}$ are independent, $\rho_h=0$. It turns out that this property depends primarily on how the observations are centred by $\mu$. A natural candidate is to take $\mu(t)= \E X_i(t)$, the mean function of the observations, however this does not lead to a fSACF with this property.

What appears instead to be the correct ``centre" in order to preserve this property is take $\mu$ to be the functional spatial median: 

\begin{definition}\label{sp-def} {\rm  The {\it spatial median} (also sometimes referred to as the {\it geometric median}) $\mu$ of a random variable $X \in \HH$ is any element of $\mathbb{H}$  satisfying 
	\[
	\mu  = \underset{y \in \HH} {\arg \min }~ \EXp{\norm{X - y} - \norm{X}}.
	\]	}
\end{definition}
See \cite{kemperman:1987}, \cite{gervini:2008:robust-fpca},  \cite{cardot:2013},  and  \cite{boente:2019:spatial}. As noted in equation (2) of \cite{gervini:2008:robust-fpca}, when $P(X=\mu)=0$, the definition of the spatial median implies that $\E S(X-\mu)=0$. As a result when $X_0$ and $X_h$ are independent and $\mu$ is the spatial median, we have by Fubini's theorem that $\rho_h = \langle \E S(X_0-\mu), \E S(X_h-\mu) \rangle=0,  $ so long as  $P(X_i=\mu)=0$. 

Going forward we thus take $\mu$ in the definition of $\rho_h$ to be the functional spatial median as in Definition \ref{sp-def}.

\subsection{Inference for  \texorpdfstring{$\rho_h$}{rho} assuming \texorpdfstring{$\mu$}{u} is known}\label{subsec-mean-k}

Although in practice the spatial median $\mu$ must be estimated from the sample in order to produce feasible estimators of $\rho_h$, we think it is useful to begin by considering estimators of $\rho_h$, and their properties, assuming $\mu$ is known. In the next subsection we will consider the properties of these estimators when  $\mu$ is replaced with typical estimators of the spatial median. Presenting the results in this order allows us to separate the properties of the underlying process $\{X_i\}_{i\in \mathbb{Z}}$ that are needed to perform inference on $\rho_h$ from those that are required to handle the effect of estimating $\mu$.    

Assuming $\mu$ is known, a natural estimator of $\rho_h$ is 
\begin{equation} \label{eq-est-dynauto-false}
	\tilde{\rho}_h = \frac{1}{n }\sum_{i=1}^{n-h} \inp{S(X_i-\mu)}{S(X_{i+h}- \mu)}.
\end{equation}

When analyzing a functional time series, we are often interested in evaluating whether the series, perhaps comprised of model residuals, appears to be a white noise, or instead appears to exhibit significant autocorrelation at some lags, which may inform subsequent modelling.   As such our main aim is to approximate the distribution of $\tilde{\rho}_h$ if the underlying series forms a strong white noise, i.e. a sequence of independent and identically distributed functional data objects, and also show that $\tilde{\rho}_h$ is consistent for $\rho_h$ if the underlying series is strictly stationary. 

If the series $\{X_i\}_{i\in \mathbb{Z}}$ is strictly stationary and ergodic, then $\tilde{\rho}_h \stackrel{a.s.}{\to} \rho_h$ by the mean ergodic theorem. Our first result then concerns the asymptotic distribution of $\tilde{\rho}_h$ when the underlying series is a strong white noise. Let 

$$
C_P(t,s) = \E [S(X_0 - \mu )(t)][S(X_0-\mu )(s)], \quad t,s\in[0,1],
$$
denote the covariance kernel of the observations projected onto the unit sphere in $\mathbb{H}$.  

\begin{theorem}\label{thm-1} Suppose that $\{X_i\}_{i \in \mathbb{Z}}$ is a strong white noise in $\HH$ such that  $P(X_0 = \mu)=0$. Then 

	\begin{equation}
		\sqrt{n} \tilde{\rho}_h  \id \N(0,\|C_p\|_2^2),
	\end{equation}
where $\|\cdot\|_2$ is the Hilbert-Schmidt norm of a kernel. Moreover, if for a positive integer $H$ $\tilde{R}_H=(\tilde{\rho}_1,...,\tilde{\rho}_H)^\top$, then	

	\begin{equation}\label{eq-distribution}
		\sqrt{n} \tilde{R}_H  \id \N_H(0,\|C_p\|_2^2 \mathbb{I}_H),
	\end{equation}
	where $\N_H(m,\Sigma)$ denotes an $H$-dimensional normal random vector with mean vector $m$ and covariance $\Sigma$, and $\mathbb{I}_H$ is the $H$-dimensional identity matrix. 
\end{theorem}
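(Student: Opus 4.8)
The plan is to write $n\tilde{\rho}_h$ as a sum of martingale differences and invoke a martingale central limit theorem, using crucially that the spatial sign is bounded, $\norm{S(x)}\le 1$, to render every moment and Lindeberg-type condition trivial (this is exactly where the robustness pays off). Set $Y_i := S(X_i-\mu)$, so the $Y_i$ are i.i.d.\ elements of $\HH$ with $\norm{Y_i}\le 1$ and, since $P(X_0=\mu)=0$, with $\E Y_i = 0$ (as recalled in the discussion following Definition \ref{sp-def}). Reindexing by $k=i+h$,
\[
	n\tilde{\rho}_h = \sum_{i=1}^{n-h}\inp{Y_i}{Y_{i+h}} = \sum_{k=h+1}^{n} D_k, \qquad D_k := \inp{Y_{k-h}}{Y_k}.
\]
With respect to the filtration $\mathcal{F}_k := \sigma(Y_1,\dots,Y_k)$, the sequence $\{D_k\}$ is a square-integrable martingale difference sequence: $Y_{k-h}$ is $\mathcal{F}_{k-1}$-measurable (as $h\ge 1$) whereas $Y_k$ is independent of $\mathcal{F}_{k-1}$, so $\E[D_k\mid\mathcal{F}_{k-1}] = \inp{Y_{k-h}}{\E Y_k}=0$, the exchange of the conditional expectation with the inner product being justified by Bochner integrability.

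Next I would verify the hypotheses of a martingale CLT (e.g.\ the Hall--Heyde or McLeish version). The conditional Lindeberg condition is immediate since $\abs{D_k}\le\norm{Y_{k-h}}\norm{Y_k}\le 1$, so $\E[D_k^2\,\1{\abs{D_k}>\varepsilon\sqrt n}\mid\mathcal{F}_{k-1}]=0$ for every $k$ as soon as $n>\varepsilon^{-2}$. For the conditional variance, using the identity $\inp{a}{c}^2=\inp{a\otimes a}{c\otimes c}_{\text{HS}}$ and $\E[Y_k\otimes Y_k]=C_P$ gives $\E[D_k^2\mid\mathcal{F}_{k-1}] = \inp{Y_{k-h}\otimes Y_{k-h}}{C_P}_{\text{HS}}$, hence
\[
	\frac1n\sum_{k=h+1}^n\E[D_k^2\mid\mathcal{F}_{k-1}] = \inp{\frac1n\sum_{k=h+1}^n Y_{k-h}\otimes Y_{k-h}}{C_P}_{\text{HS}} \ip \inp{C_P}{C_P}_{\text{HS}} = \norm{C_P}_2^2,
\]
where the convergence is the Hilbert-space strong law of large numbers applied to the i.i.d.\ Hilbert--Schmidt-valued variables $Y_{k-h}\otimes Y_{k-h}$, whose common mean is $C_P$, together with continuity of $\inp{\cdot}{C_P}_{\text{HS}}$; here $C_P$ is Hilbert--Schmidt since $\norm{C_P}_2^2\le(\E\norm{Y_0}^2)^2\le1$. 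The martingale CLT then delivers $n^{-1/2}\sum_{k=h+1}^n D_k\id\N(0,\norm{C_P}_2^2)$, i.e.\ $\sqrt n\,\tilde{\rho}_h\id\N(0,\norm{C_P}_2^2)$.

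For the joint statement I would apply the Cram\'er--Wold device. Fix $c=(c_1,\dots,c_H)^\top\in\mathbb{R}^H$ and write $\sum_{h=1}^H c_h\sqrt n\,\tilde{\rho}_h = n^{-1/2}\sum_{k=2}^n\tilde D_k$ with $\tilde D_k := \sum_{h=1}^{\min(H,\,k-1)} c_h\inp{Y_{k-h}}{Y_k}$, which is again an $\{\mathcal{F}_k\}$-martingale difference sequence, now bounded by $\sum_{h=1}^H\abs{c_h}$, so Lindeberg again holds trivially. Its conditional variance is $\E[\tilde D_k^2\mid\mathcal{F}_{k-1}]=\sum_{h,h'} c_h c_{h'}\inp{Y_{k-h}\otimes Y_{k-h'}}{C_P}_{\text{HS}}$; averaging over $k$, the diagonal terms $h=h'$ contribute $\big(\sum_h c_h^2\big)\norm{C_P}_2^2$ exactly as before, while for $h\ne h'$ the quantity $n^{-1}\sum_k Y_{k-h}\otimes Y_{k-h'}$ is a lag-$\abs{h-h'}$ sample cross-moment of the centred i.i.d.\ sequence $\{Y_i\}$ and converges in Hilbert--Schmidt norm to $\E Y_0\otimes\E Y_{\abs{h-h'}}=0$, so these terms are asymptotically negligible. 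Thus $n^{-1/2}\sum_k\tilde D_k\id\N\big(0,\norm{C_P}_2^2\sum_h c_h^2\big)$ for every $c$, which by Cram\'er--Wold is precisely \eqref{eq-distribution}.

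The main obstacle I anticipate is not the central limit theorem itself---boundedness of the spatial sign disposes of every moment and truncation condition that usually complicates such lag-product statistics---but rather the bookkeeping in the conditional-variance step: carefully justifying the law of large numbers for the operator-valued averages $n^{-1}\sum_k Y_{k-h}\otimes Y_{k-h'}$, controlling the negligible boundary terms created by the $\min(H,k-1)$ truncation in $\tilde D_k$, and confirming that the overlapping index structure of these finitely dependent arrays does not perturb the limit.
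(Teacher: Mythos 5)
Your proof is correct, but it reaches the conclusion by a different tool than the paper. Both arguments start from the same decomposition of $n\tilde{\rho}_h$ into the lag-$h$ products $\inp{S(X_i-\mu)}{S(X_{i+h}-\mu)}$, but the paper treats $\{T_{i,h}\}$ as a strictly stationary, mean-zero, $h$-dependent scalar sequence with vanishing autocovariances and simply cites the CLT for $m$-dependent stationary sequences (Theorem 6.4.2 of Brockwell and Davis), handling the joint statement by the multivariate analogue (their Proposition 11.2.2) after checking that $\E \bm{T}_{i,H}\bm{T}_{i,H}^\top=\|C_P\|_2^2\,\mathbb{I}_H$ and that distinct indices are uncorrelated. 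You instead organize the same summands as a martingale difference array with respect to $\mathcal{F}_k=\sigma(Y_1,\dots,Y_k)$ and invoke a martingale CLT, with Cram\'er--Wold for the vector case. Your route is more self-contained: the boundedness $\norm{S(x)}\le 1$ kills the Lindeberg condition outright, and the conditional-variance computation via $\E[D_k^2\mid\mathcal{F}_{k-1}]=\inp{Y_{k-h}\otimes Y_{k-h}}{C_P}_{\mathrm{HS}}$ together with the (scalar or Hilbert-space) strong law makes the limiting variance transparent rather than delegated to a cited theorem; the price is the extra bookkeeping you correctly flag for the off-diagonal operator averages and the $\min(H,k-1)$ boundary terms, which the paper's $m$-dependent-CLT citation absorbs for free. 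Both proofs use the identical structural facts --- $\E S(X_0-\mu)=0$ from the spatial-median property and independence/Fubini to identify $\|C_P\|_2^2$ --- so the substance is the same; only the limit theorem invoked differs.
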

The proof of Theorem \ref{thm-1} is given in the Appendix \ref{app-thm1}. This result may be used to justify the asymptotic consistency of tests of the hypotheses:

$\mathcal{H}_{0,h}$: For a fixed positive integer $h$, $\rho_h=0$, and 

$\mathcal{H}^\prime_{0,H}$: For a fixed positive integer $H$, $\rho_h=0$ for all $h\in \{1,...,H\}$. 

In particular, an asymptotically sized $\alpha$ test of $\mathcal{H}_{0,h}$ is to reject when  $|\sqrt{n}\tilde{\rho}_h| > \|C_P\|_2 z_{ \alpha/2}$, where $z_q$ is the $q$ critical value of the standard normal distribution. Similarly, a $1-\alpha$ confidence set for the fSACF assuming the series is a strong white noise is 
\begin{align}\label{conf-def-tilde}
-\frac{z_{1-\alpha/2}}{\sqrt{n}\|C_P\|_2} \le \tilde{\rho}_h  \le \frac{z_{1-\alpha/2}}{\sqrt{n}\|C_P\|_2}. 
\end{align}
Comparing the estimated fSACF $\tilde{\rho}_h$ to its 95\% confidence bounds assuming the underlying series follows a strong white noise leads to a simple visual summary that may be used to measure at a glance the serial dependence structure of a functional time series; see e.g. Figure \ref{fig-white-noise} below.  

Testing $\mathcal{H}^\prime_{0,H}$ is often referred to as ``portmanteau testing". As a result of Theorem \ref{thm-1}, the test statistic is
$$
\tilde{Q}_{n,H} = \|\sqrt{n}\tilde{R}_H\|^2_E = n \sum_{h=1}^H \tilde{\rho}_h^2 \id \|C_P\|^2 \chi^2(H), 
$$
where $\|\cdot\|_E$ is the standard Euclidean norm of a vector, and $\chi^2(H)$ denotes a $\chi^2$ random variable with $H$ degrees of freedom. An approximate P-value of a test of $\mathcal{H}_{0,H}^\prime$  may be computed as $p= P( \|C_P\|^2 \chi^2(H) > \tilde{Q}_{n,H})$.

\subsection{Inference for \texorpdfstring{$\rho_h$}{rho} assuming \texorpdfstring{$\mu$}{u} is unknown}

In practice of course the spatial median $\mu$ is typically not known, and must instead be estimated from the sample. Given its definition, it is natural to estimate $\mu$ with
\begin{align}\label{med-est-def}
\hat{\mu}_{ideal}   = \underset{\mu \in \HH} {\arg \min } \sum_{i=1}^n\norm{X_i - \mu}.
\end{align}
Although there is no closed-form solution for the estimator satisfying \eqref{med-est-def}, several authors have proposed iterative procedures to approximate $\hat{\mu}_{ideal}$, and have shown that these lead to consistent estimators with quantifiable convergence rates. For example, \cite{gervini:2008:robust-fpca} proposes an approach to approximately solve \eqref{med-est-def} using Gateaux differentials, and \cite{cardot:2013} propose an approach based on stochastic gradient decent. Both of these approaches result in an estimator $\hat{\mu}$ approximately satisfying \eqref{med-est-def}. Using $\hat{\mu}$ in place of $\mu$ in \eqref{eq-est-dynauto-false} leads to the feasible estimator 

\begin{equation} \label{eq-est-dynauto-true}
	\hat{\rho}_h = \frac{1}{n }\sum_{i=1}^{n-h} \inp{S(X_i-\hat{\mu})}{S(X_{i+h}- \hat{\mu})}.
\end{equation}
We now aim to establish conditions under which $\hat{\rho}_h$ also satisfies Theorem \ref{thm-1} and is asymptotically consistent to $\rho_h$. Or, in other words, conditions under which the effect of estimating the spatial median $\mu$ is asymptotically negligible. We begin by establishing an analogue of Theorem \ref{thm-1} under the following additional assumptions.    

\begin{assumption}\label{as-tight} The sequence $\sqrt{n}(\hat{\mu}- \mu )$ is uniformly tight in $\HH$.
\end{assumption}

\begin{assumption}\label{as-moment-2} 
There exists constants $C_1,C_2>0$ so that for all $u \in \HH$ with $\|u \|\le C_1$, $\E \|X_i - (\mu + u )\|^{-2} \le C_2$.
\end{assumption}
For a definition of uniform tightness in a metric space see \cite{billingsley:1968}.  Assumption \ref{as-moment-2} coincides with Assumption A3 in \cite{cardot:2013}, where it is discussed extensively. It holds, for example, for Gaussian processes under mild conditions, as well as for many other processes possessing exponentially decaying small-ball probabilities; see  \cite{lishao:2001,nazarov:2009}. The stochastic gradient decent-based estimator of \cite{cardot:2013} also satisfies Assumption \ref{as-tight} under Assumption \ref{as-moment-2} and an additional assumption entailing that the spatial median is unique. The estimator of the spatial median proposed in \cite{gervini:2008:robust-fpca}  also satisfies Assumption \ref{as-tight} under similar conditions. 

\begin{theorem}\label{thm-est-white-con} 
Suppose that $\{X_i\}_{i \in \mathbb{Z}}$ is a strong white noise in $\HH$, and that Assumptions \ref{as-tight} and \ref{as-moment-2} hold. Then $\hat{\rho}_h$ satisfies Theorem \ref{thm-1} in place of $\tilde{\rho}_h$. 
\end{theorem}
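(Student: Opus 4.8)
The plan is to show that replacing the spatial median $\mu$ by $\hmu$ has no effect at the $\sqrt n$ scale, i.e. $\sqrt n\,(\hrho_h-\trho_h)\ip 0$ and, coordinatewise, $\sqrt n\,(\hat R_H-\tilde R_H)\ip 0$; the conclusion then follows from Theorem \ref{thm-1} and Slutsky's theorem. As a preliminary observation, Assumption \ref{as-moment-2} with $u=0$ forces $\E\norm{X_0-\mu}^{-2}\le C_2<\infty$, hence $\PP(X_0=\mu)=0$, so the hypotheses of Theorem \ref{thm-1} are met and, by the remark following Definition \ref{sp-def}, $\E S(X_0-\mu)=0$.

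Writing $f_i(\nu)=\inp{S(X_i-\nu)}{S(X_{i+h}-\nu)}$ and $g(\nu)=n^{-1}\sum_{i=1}^{n-h}f_i(\nu)$, so that $\hrho_h=g(\hmu)$ and $\trho_h=g(\mu)$, the key step is a first-order expansion of $g$ about $\mu$. The spatial sign is Fréchet differentiable away from $0$, with $DS(x)[w]=\norm{x}^{-1}(w-\inp{S(x)}{w}S(x))$, and one has the two elementary bounds $\norm{S(x-w)-S(x)}\le 2\norm{w}/\norm{x}$ and $\norm{S(x-w)-S(x)-DS(x)[-w]}\le c\,\norm{w}^2/\norm{x}^2$, valid for all $x\ne 0$, $w\in\HH$ with a universal $c$ (the second is the usual Taylor bound when $\norm{w}\le\norm{x}/2$, and follows from $\norm{S(\cdot)}\le1$ and $\norm{DS(x)[-w]}\le\norm{w}/\norm{x}$ otherwise). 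Substituting the resulting expansions of $S(X_i-\hmu)$ and $S(X_{i+h}-\hmu)$ into $f_i$ and collecting terms yields the identity
\[
\hrho_h-\trho_h=\inp{\hmu-\mu}{G_n}+R_n,\qquad G_n:=\frac1n\sum_{i=1}^{n-h}\nabla f_i(\mu),
\]
where, with $U_j:=S(X_j-\mu)$,
\[
\nabla f_i(\mu)=-\frac{U_{i+h}-\inp{U_i}{U_{i+h}}U_i}{\norm{X_i-\mu}}-\frac{U_i-\inp{U_i}{U_{i+h}}U_{i+h}}{\norm{X_{i+h}-\mu}},
\]
and, after bounding a cross term by AM--GM, $|R_n|\le c'\,\norm{\hmu-\mu}^2\cdot n^{-1}\sum_{i=1}^{n-h}\bigl(\norm{X_i-\mu}^{-2}+\norm{X_{i+h}-\mu}^{-2}\bigr)$ for a universal $c'$.

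The two pieces are then handled separately. For the remainder, Assumption \ref{as-tight} gives $\sqrt n\,\norm{\hmu-\mu}=O_p(1)$, hence $\sqrt n\,\norm{\hmu-\mu}^2=o_p(1)$, while the i.i.d.\ structure with $\E\norm{X_0-\mu}^{-2}\le C_2<\infty$ gives, by the strong law of large numbers, $n^{-1}\sum_{i=1}^{n-h}(\norm{X_i-\mu}^{-2}+\norm{X_{i+h}-\mu}^{-2})\ias 2\E\norm{X_0-\mu}^{-2}<\infty$; so $\sqrt n\,R_n=o_p(1)$. For the linear term, by Cauchy--Schwarz and Assumption \ref{as-tight} it suffices that $\norm{G_n}\ip 0$, and this is where the white-noise hypothesis enters: since $X_i$ and $X_{i+h}$ are independent for $h\ge 1$ and $\E U_j=0$, conditioning on $X_i$ (respectively on $X_{i+h}$) shows $\E\nabla f_i(\mu)=0$; as $\{\nabla f_i(\mu)\}_i$ is stationary, $h$-dependent, and $\HH$-valued with $\E\norm{\nabla f_i(\mu)}\le 4\E\norm{X_0-\mu}^{-1}\le 4C_2^{1/2}<\infty$ (using $\norm{U_j}\le1$), the mean ergodic theorem yields $G_n\ias 0$. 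Combining the pieces gives $\sqrt n\,(\hrho_h-\trho_h)=o_p(1)$, the same argument coordinatewise gives $\sqrt n\,(\hat R_H-\tilde R_H)\ip 0$, and Theorem \ref{thm-1} with Slutsky's theorem completes the proof.

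I expect the main obstacle to be the control of $R_n$ \emph{after} multiplying by $\sqrt n$: a crude Lipschitz bound on $f_i$ only delivers $\sqrt n\,(\hrho_h-\trho_h)=O_p(1)$, so one genuinely needs the first-order expansion, which both exposes the extra power of $\norm{\hmu-\mu}$ in the remainder and reveals the linear term as $\inp{\hmu-\mu}{G_n}$ with $G_n$ a mean-zero average — the latter vanishing only by virtue of $\E U_j=0$ together with the independence supplied by the white-noise assumption. Verifying the uniform-in-$x$ quadratic Taylor bound for the spatial sign and the finiteness of the limiting inverse-moment averages under Assumption \ref{as-moment-2} are the routine supporting computations.
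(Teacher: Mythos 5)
Your proof is correct, but it takes a genuinely different route from the paper's. You linearize the spatial sign map: a second-order Taylor expansion of $S$ about $X_i-\mu$ turns $\hrho_h-\trho_h$ into an exactly linear term $\inp{\hmu-\mu}{G_n}$ plus a remainder that is quadratic in $\norm{\hmu-\mu}$. The remainder dies after multiplication by $\sqrt n$ because $\sqrt n\,\norm{\hmu-\mu}^2=o_P(1)$ under Assumption \ref{as-tight} while the inverse second moments average to a finite limit under Assumption \ref{as-moment-2}; the linear term dies because $G_n$ is a mean-zero stationary ergodic average --- this is precisely where $\E S(X_0-\mu)=0$ and the independence at lag $h$ enter --- so $\norm{G_n}=o_P(1)$ suffices with no rate. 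The paper never differentiates $S$: it expands the difference of inner products over the common denominator $\hat m_{i,h}=\norm{X_i}\norm{X_{i+h}}\norm{X_i-\hmu}\norm{X_{i+h}-\hmu}$ into four terms $R_{1,n},\dots,R_{4,n}$, writes each as a random process $G_{j,n}(u)$ evaluated at $u=\sqrt n\,\hmu$, and proves $\sup_{u\in K}\norm{G_{j,n}(u)}=o_P(1)$ over compact sets $K$ via pointwise convergence plus stochastic equicontinuity (Newey's uniform law of large numbers), using tightness of $\sqrt n\,\hmu$ to restrict to such $K$. Your route buys a shorter argument that makes transparent why the white-noise hypothesis is indispensable (it annihilates the gradient average, which would otherwise contribute exactly at the $n^{-1/2}$ scale), at the cost of verifying the uniform quadratic bound $\norm{S(x-w)-S(x)-DS(x)[-w]}\le c\,\norm{w}^2/\norm{x}^2$ for all $x\ne 0$ --- which you do correctly by splitting according to whether $\norm{w}\le\norm{x}/2$. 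The paper's route avoids all differentiability considerations but is considerably longer and uses the full strength of Assumption \ref{as-moment-2} uniformly over a ball of centres $u$, whereas your argument only needs it at $u=0$. Both are complete proofs of the same conclusion.
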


Theorem \ref{thm-est-white-con} shows that under Assumptions \ref{as-tight} and \ref{as-moment-2}, the same confidence sets and portmanteau tests based on $\tilde{\rho}_h$ remain asymptotically valid when based on $\hat{\rho}_h$ instead.  

We now turn to the asymptotic consistency of $\hat{\rho}_h$, which we establish under the following assumptions.  

\begin{assumption}\label{ass1} $\hmu \ias \mu$ as $n\to \infty$. 
\end{assumption}

\begin{assumption}\label{ass2}
	For any $h\ge 1$, $\EXp{\norm{X_0}^{-1}\norm{X_h}^{-1}}< \infty$. 
\end{assumption}

\begin{assumption}\label{ass3}
	  $\EXp{\norm{X_0}^{-1}} <\infty$ . 
\end{assumption}

\begin{theorem}\label{thm-cons-2} Suppose $\{X_i\}_{i \in \mathbb{Z}}$ is strictly stationary and ergodic, and that Assumptions \ref{ass1}-\ref{ass3} hold. Then $\hat{\rho}_h \stackrel{a.s.}{\to} \rho_h$ as $n \to \infty$. 
    
\end{theorem}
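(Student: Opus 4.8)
The plan is to reduce the feasible statistic $\hrho_h$ to the infeasible $\trho_h$ of \eqref{eq-est-dynauto-false}, dispose of $\trho_h$ by the ergodic theorem, and then show the effect of replacing $\mu$ by $\hmu$ is asymptotically negligible. For the first step, the map $(x,y)\mapsto\inp{S(x-\mu)}{S(y-\mu)}$ is measurable and, by Cauchy--Schwarz with $\norm{S(\cdot)}\le1$, bounded in absolute value by $1$; applying Birkhoff's ergodic theorem to the stationary ergodic sequence $\{(X_i,X_{i+h})\}_{i\in\mathbb Z}$ gives $\tfrac1{n-h}\sum_{i=1}^{n-h}\inp{S(X_i-\mu)}{S(X_{i+h}-\mu)}\ias\rho_h$, and since $(n-h)/n\to1$ also $\trho_h\ias\rho_h$, as already noted below \eqref{eq-est-dynauto-false}. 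It therefore suffices to prove $\hrho_h-\trho_h\ias0$.

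The key analytic ingredient is the weighted Lipschitz bound for the spatial sign: for $x\in\HH$ and $y\in\HH\setminus\{0\}$ one has $\norm{S(x)-S(y)}\le2\norm{x-y}/\norm{y}$ (and trivially $\norm{S(x)-S(y)}\le2$), which follows by writing $S(x)-S(y)=\norm{x}^{-1}(x-y)+(\norm{x}^{-1}-\norm{y}^{-1})\,y$ and using the reverse triangle inequality. Taking $x=X_i-\hmu$ and $y=X_i-\mu$ (so $x-y=\mu-\hmu$) yields $\norm{S(X_i-\hmu)-S(X_i-\mu)}\le2\norm{\hmu-\mu}/\norm{X_i-\mu}$, with $\norm{X_i-\mu}>0$ a.s.\ since the moment hypotheses preclude an atom of $X_0$ at $\mu$. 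Expanding each summand of $\hrho_h-\trho_h$ by bilinearity as $\inp{S(X_i-\hmu)-S(X_i-\mu)}{S(X_{i+h}-\hmu)}+\inp{S(X_i-\mu)}{S(X_{i+h}-\hmu)-S(X_{i+h}-\mu)}$ and bounding each piece by Cauchy--Schwarz and $\norm{S(\cdot)}\le1$, I would obtain
\[
\big|\hrho_h-\trho_h\big|\;\le\;2\norm{\hmu-\mu}\left(\frac1n\sum_{i=1}^{n-h}\frac1{\norm{X_i-\mu}}+\frac1n\sum_{i=1}^{n-h}\frac1{\norm{X_{i+h}-\mu}}\right).
\]

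To finish, by Assumption \ref{ass3} the function $x\mapsto\norm{x-\mu}^{-1}$ is integrable, so the ergodic theorem makes each Cesàro average on the right converge a.s.\ to $\EXp[\norm{X_0-\mu}^{-1}]<\infty$ (the factor $(n-h)/n\to1$ and the finitely many shifted boundary terms being immaterial), while $\norm{\hmu-\mu}\ias0$ by Assumption \ref{ass1}; hence $|\hrho_h-\trho_h|\ias0$ and, with the first step, $\hrho_h\ias\rho_h$. If one instead keeps the symmetric three-term expansion, whose extra term is $\inp{S(X_i-\hmu)-S(X_i-\mu)}{S(X_{i+h}-\hmu)-S(X_{i+h}-\mu)}\le4\norm{\hmu-\mu}^2\big(\norm{X_i-\mu}\,\norm{X_{i+h}-\mu}\big)^{-1}$, that term is handled the same way, now invoking Assumption \ref{ass2} so that $\tfrac1n\sum_i(\norm{X_i-\mu}\,\norm{X_{i+h}-\mu})^{-1}$ converges to a finite limit.

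I expect the only real obstacle to be dealing with $\hmu$, which is available only through its a.s.\ consistency (Assumption \ref{ass1}) and the iterative algorithms defining it, with no closed form, so that $\hrho_h$ cannot be linearized in $\hmu-\mu$ as one would for a central limit theorem. The reason consistency still comes out cleanly is that it needs only the crude weighted-Lipschitz control of $S$; the one point requiring care is to set up that bound with the \emph{deterministic} denominator $\norm{X_i-\mu}$ rather than the random $\norm{X_i-\hmu}$, so that the ergodic theorem applies to a fixed, integrable integrand — which is exactly what the reciprocal-moment Assumptions \ref{ass2}--\ref{ass3} guarantee.
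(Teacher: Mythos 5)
Your proof is correct, but it takes a genuinely different and considerably shorter route than the paper's. The paper also reduces to showing $\hrho_h-\trho_h\ias 0$, but then splits the index set according to whether $(X_i,X_{i+h})$ lies in $A_n=\{(x,y):2\norm{x-\hmu}>\norm{x}\text{ and }2\norm{y-\hmu}>\norm{y}\}$, expands the difference of inner products over $A_n$ into four algebraic pieces sharing the random denominator $\nxi\nxih\nxihmu\nxihhmu$, uses membership in $A_n$ to replace $\nxihmu^{-1}$ by $2\nxi^{-1}$ so that the ergodic theorem applies under Assumptions \ref{ass2}--\ref{ass3}, and separately shows the fraction of indices in $A_n^c$ vanishes a.s.\ (each such summand being bounded by $2$). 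Your weighted Lipschitz inequality $\norm{S(x)-S(y)}\le 2\norm{x-y}/\norm{y}$ applied with $y=X_i-\mu$ puts the \emph{deterministic} denominator $\norm{X_i-\mu}$ in place from the outset, which eliminates both the $A_n$/$A_n^c$ dichotomy and the multi-term expansion; your main line in fact uses only Assumptions \ref{ass1} and \ref{ass3}, with Assumption \ref{ass2} needed only for the optional symmetric three-term variant. One small slip in the write-up: the decomposition you display, $S(x)-S(y)=\norm{x}^{-1}(x-y)+(\norm{x}^{-1}-\norm{y}^{-1})y$, yields the bound with $\norm{x}$ in the denominator; to obtain the denominator $\norm{y}$ that you actually need (so that the denominator is $\norm{X_i-\mu}$ rather than $\norm{X_i-\hmu}$), use the companion identity $S(x)-S(y)=\norm{y}^{-1}(x-y)+(\norm{x}^{-1}-\norm{y}^{-1})x$ together with the reverse triangle inequality. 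The claimed inequality is true, so this does not affect the validity of the argument, and your check that $P(X_0=\mu)=0$ follows from Assumption \ref{ass3} correctly disposes of the degenerate case $S(0)=0$.
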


Assumption \ref{ass1}-\ref{ass3} are analogous in a time series setting of the assumptions in  Theorem 1 of \cite{boente:2019:spatial}, and the proof makes use of some of the same ideas. Theorem \ref{thm-cons-2} implies that when $\rho_h \ne 0$ for some $h\in \{1,...,H\}$, both tests of $\mathcal{H}_{0,h}$ and $\mathcal{H}^\prime_{0,H}$ described at the end of Section \ref{subsec-mean-k}  are consistent when based on $\hat{\rho}_h$.  

\section{Implementation and simulation studies}\label{sec-simulation}

In this section we briefly provide some details on implementation, and then present the results of several simulation studies that aimed to evaluate the proposed inferential procedures for the fSACF.

In the simulations and applications below, we assume that all functional data are observed on a common grid $t_j \in [0,1]$, $j=1,...,M$. In our applications to electricity price data and intraday asset price data below, each functional data object is observed at $M=24$ and $M=78$ points, respectively, and for all simulated examples we generated the data on $M=101$ equally spaced points. In these cases all inner products and norms involved in producing $\hat{\rho}_h$ can be estimated by simple Riemann integration. For instance, with $t_0=0$ 
$$
\|X_i-\hat{\mu}\|^2 \approx \sum_{j=1}^M [X_i(t_j)-\hat{\mu}(t_j)]^2[t_j-t_{j-1}]$$
and 
\begin{align*}
\langle S(X_i - \hat{\mu}),S(X_{i+h} - \hat{\mu}) \rangle &\approx \frac{1}{ \|X_i-\hat{\mu}\|\|X_{i+h}-\hat{\mu}\|} \times \\
&\sum_{j=1}^M [X_i(t_j)-\hat{\mu}(t_j)][X_{i+h}(t_j)-\hat{\mu}(t_j)](t_j-t_{j-1}).   
\end{align*}
If the functional data are not observed on a common grid, then one may use interpolation techniques to estimate the integrals required to compute $\hat{\rho}_h$; see e.g. Chapter 2 of \cite{ramsay:silverman:2005:FDA}.    

In order to estimate $\|C_P\|_2$,  we use the estimator 
\[
\hat{C}_P(t,s) = \frac{1}{n}\sum_{j=1}^n S(X_j(t)-\hat{\mu}(t)) S(X_j(s)-\hat{\mu}(s)),
\]
and estimate its norm based on the discrete sample with 
\[
 \|\hat{C}_P^2\|_2^2 \approx \sum_{j,k=1}^M  \hat{C}_P^2(t_j,t_k)[t_j-t_{j-1}][t_k-t_{k-1}].
\]

Additional simulations evaluating the accuracy of these estimators are provided in Appendix \ref{sim-app}, which suggested that they work well so long as the resolution of the data is not too low. With these estimators, the confidence sets for $\hat{\rho}_h$ assuming the series is a strong white noise, and p-values for the portmanteau test, may be calculated respectively as 

\begin{align}\label{conf-def}
-\frac{z_{1-\alpha/2}}{\sqrt{n}\|\hat{C}_P\|_2} \le \hat{\rho}_h  \le \frac{z_{1-\alpha/2}}{\sqrt{n}\|\hat{C}_P\|_2}, 
\end{align}
and 
\begin{align}\label{port-test}
\hat{Q}_{n,H} = n \sum_{h=1}^H \hat{\rho}_h^2, \;\; p=P( \|\hat{C}_P\|^2 \chi^2(H) > \hat{Q}_{n,H}). 
\end{align}

In all examples below $\mu$ was estimated using the estimator described in \cite{gervini:2008:robust-fpca}. 

We now turn to presenting the results of several simulation experiments. Below we refer to the functional autocorrelation function of \cite{mestre:2021:facf} as the \textit{fACF}. All the computations are performed using \textit{R}, version 4.1.2 \cite{R412}. 

\subsection{White noise processes}
We simulated several strong white noise (WN) processes,  and examined their fACF and fSACF. The examples that we considered were

\begin{enumerate}
	\item Standard Brownian motion (BM): $X_i(t) = W_i(t)$, where $\{W_i(t), \; t\in[0,1]\}_{i \in \mathbb{Z}}$ are iid standard Brownian motions. 
	\item Brownian bridge (BB): $X_i(t)=B_i(t)$, where   $B_i(t) = W_i(t) - tW_i(1)$.
	\item Fourier-Cauchy process (F): $X_i(t)=F_i(t)$, where $F_i(t) = Z_{1,i}+ \\\sum_{k=1}^3 [Z_{2k,i}\cos(2\pi k t) + Z_{2k+1,i}\sin(2\pi k t) ] $ and $Z_{j,i}$ are a family of independent and identically distributed standard Cauchy random variables. 
	\item \sloppy B-spline Exponential process (BS): $X_i(t) = Bs_i(t)$ where $Bs_i(t) = \sum_{k=1}\varepsilon_{i,k}B_k(t)$ in which $\varepsilon_{i,k}$ is a family of independent and identically distributed  exponential random variables with mean 1, and $B_k(t)$ is the k-th orthogonal cubic B-spline basis function.
\end{enumerate} 

The top panel of Figure \ref{fig-white-noise} displays  50 iid realizations of each WN process along with the estimated spatial median.  The estimated fSACF for each process is displayed in the bottom row of Figure \ref{fig-white-noise}, along with blue lines indicating the confidence intervals in \eqref{conf-def}.

In order to evaluate the coverage properties of the proposed confidence intervals for the fSACF, each WN process was independently generated 1000 times for sample sizes $n\in \{100,250,500,1000,2000\}$.  Table \ref{tbl-SWN} shows the proportion of estimators $\hat{\rho}_h$ and $\tilde{\rho}_h$ that were not contained in the $1-\alpha$ confidence intervals \eqref{conf-def} and \eqref{conf-def-tilde}, respectively, for $\alpha \in \{10\%, 5\%, 1\%\}$, and for several values of $h$. These results suggest that, for the sample sizes and processes considered, the confidence intervals derived from Theorems \ref{thm-1} and \ref{thm-est-white-con} have approximately correct coverage. We notably also observed this for the Fourier-Cauchy process, as well as the exponential process, which lends some evidence to the statement that the fSACF is robust against low-order moments/outliers and skewness in the data in practice. For smaller $n$ and larger values of $h$, we noticed that the intervals tended to have somewhat larger than nominal coverage, which we think may be attributed to the fact that the estimators $\hat{\rho}_h$ and $\tilde{\rho}_h$ are normalized by $1/n$ rather than $1/(n-h)$. In unreported simulations when we normalized by $1/(n-h)$ instead we noticed the opposite effect.  Additionally, we observed that the effect of estimating the spatial median appeared to be negligible in all cases.  

\begin{figure} [tbp]
	\centering
	\begin{subfigure}[b]{0.235\textwidth}
		\centering
		\includegraphics[width=\textwidth]{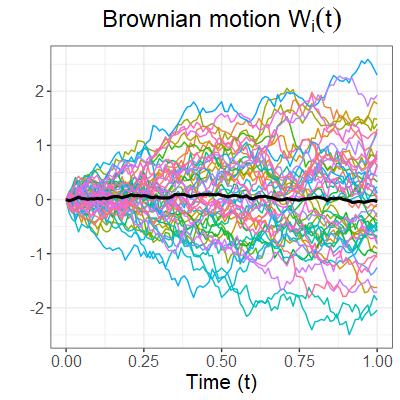}
	\end{subfigure}
	\hfill
	\begin{subfigure}[b]{0.235\textwidth}
		\centering
		\includegraphics[width=\textwidth]{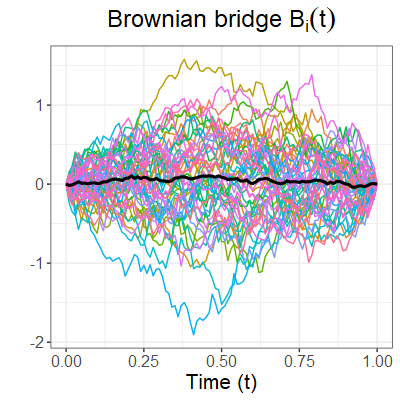}
	\end{subfigure}
	\hfill
	\begin{subfigure}[b]{0.235\textwidth}
		\centering
		\includegraphics[width=\textwidth]{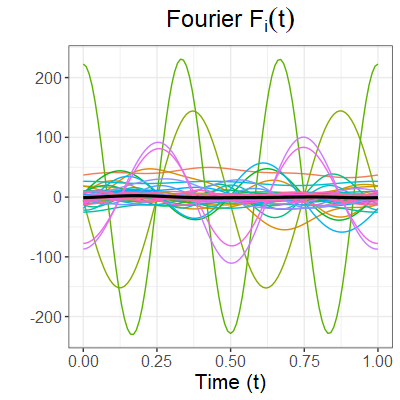}
	\end{subfigure}
	\hfill
	\begin{subfigure}[b]{0.235\textwidth}
		\centering
		\includegraphics[width=\textwidth]{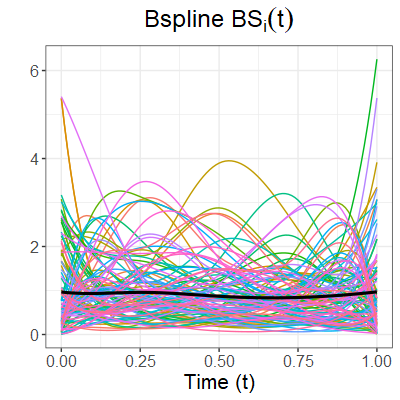}
	\end{subfigure}
	\hfill 
	\begin{subfigure}[b]{0.235\textwidth}
		\centering
		\includegraphics[width=\textwidth]{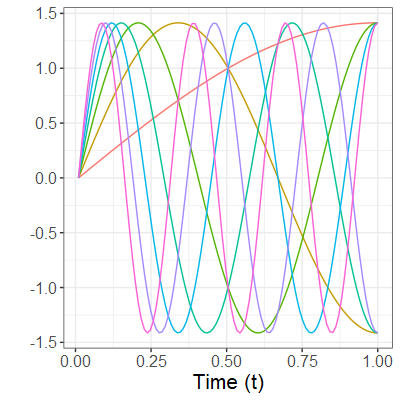}
	\end{subfigure}
	\hfill
	\begin{subfigure}[b]{0.235\textwidth}
		\centering
		\includegraphics[width=\textwidth]{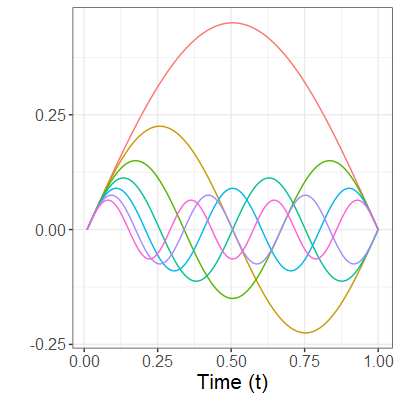}
	\end{subfigure}
	\hfill
	\begin{subfigure}[b]{0.235\textwidth}
		\centering
		\includegraphics[width=\textwidth]{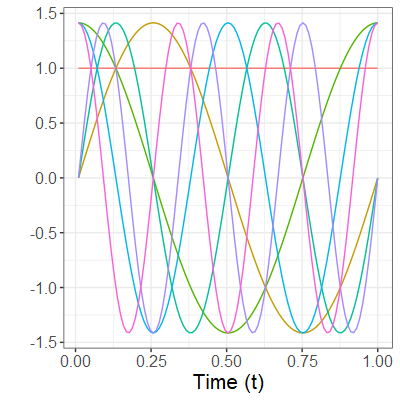}
	\end{subfigure}
	\hfill
	\begin{subfigure}[b]{0.235\textwidth}
		\centering
		\includegraphics[width=\textwidth]{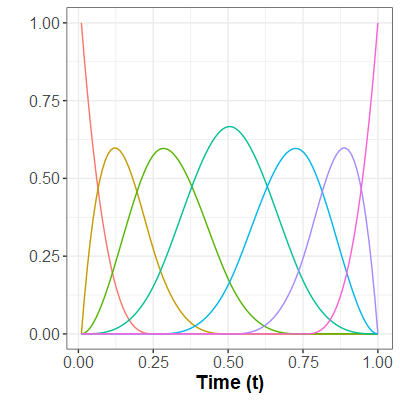}
	\end{subfigure}
	\hfill
	\begin{subfigure}[b]{0.235\textwidth}
		\centering
		\includegraphics[width=\textwidth]{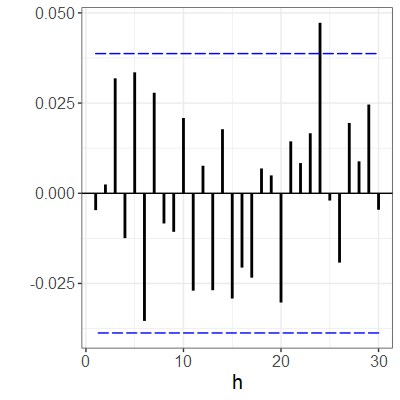}
	\end{subfigure}
	\hfill
	\begin{subfigure}[b]{0.235\textwidth}
		\centering
		\includegraphics[width=\textwidth]{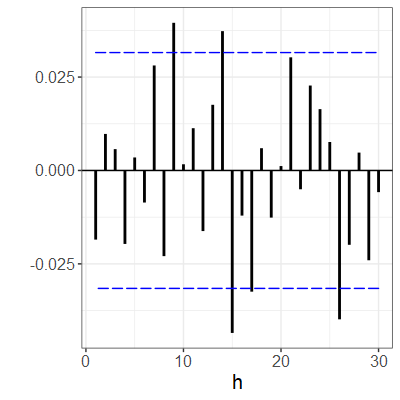}
	\end{subfigure}
	\hfill
	\begin{subfigure}[b]{0.235\textwidth}
		\centering
		\includegraphics[width=\textwidth]{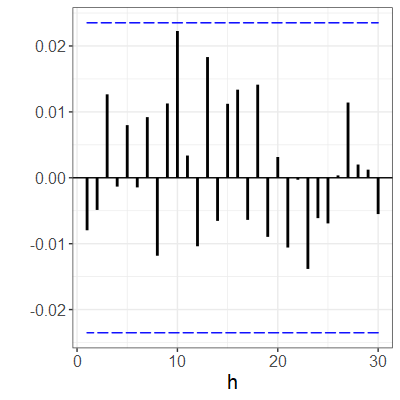}
	\end{subfigure}
	\hfill
	\begin{subfigure}[b]{0.235\textwidth}
		\centering
		\includegraphics[width=\textwidth]{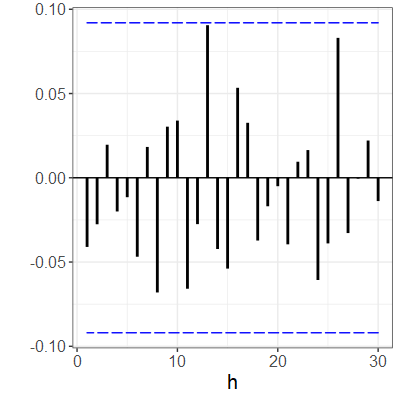}
	\end{subfigure}
	\caption{\textbf{Top}: 50 iid curves of each strong noise process with the spatial median estimate from \cite{gervini:2008:robust-fpca} (in black); \textbf{Middle}: The leading 7 basis functions in the Karhunen-Lo\'eve expansion of each process; \textbf{Bottom}: The estimated fSACF of each series for lags $h\in \{1,...,30\}$ with 95\% confidence bounds as in \eqref{conf-def} (in blue). }\label{fig-white-noise}
\end{figure}

\begin{table}[ht!]
\centering
\caption{The empirical rate among 1000 independent simulations of four WN processes that the $1-\alpha$ confidence intervals \eqref{conf-def-tilde} and \eqref{conf-def} do not contain the estimators $\tilde{\rho}_h$ and $\hat{\rho}_h$ for $\alpha \in \{10\%, 5\%, 1\%\}$, $n\in \{100,250, 500,1000,2000\}$ and several values of $h$.}
\label{tbl-SWN}
\begin{adjustbox}{max width=\textwidth}
\begin{tabular}{@{}llllllllllllllllll@{}}
\toprule
          & n    & h  & \multicolumn{6}{c}{Significance Level $\alpha$}                                                           &                      & n                    & h                    & \multicolumn{6}{c}{Significance level}                                                                    \\ \midrule
          &      &    & \multicolumn{2}{c}{10\%}          & \multicolumn{2}{c}{5\%}           & \multicolumn{2}{c}{1\%}           & \multicolumn{1}{c}{} & \multicolumn{1}{c}{} & \multicolumn{1}{c}{} & \multicolumn{2}{c}{10\%}          & \multicolumn{2}{c}{5\%}           & \multicolumn{2}{c}{1\%}           \\
Statistic &      &    & $\tilde{\rho}_h$ & $\hat{\rho}_h$ & $\tilde{\rho}_h$ & $\hat{\rho}_h$ & $\tilde{\rho}_h$ & $\hat{\rho}_h$ &                      &                      &                      & $\tilde{\rho}_h$ & $\hat{\rho}_h$ & $\tilde{\rho}_h$ & $\hat{\rho}_h$ & $\tilde{\rho}_h$ & $\hat{\rho}_h$ \\ \midrule
BM        & 100  & 1  & 0.104            & 0.108          & 0.052            & 0.055          & 0.011            & 0.013          & BB                   & 100                  & 1                    & 0.086            & 0.093          & 0.048            & 0.05           & 0.008            & 0.013          \\
          &      & 5  & 0.116            & 0.112          & 0.053            & 0.062          & 0.011            & 0.010          &                      &                      & 5                    & 0.087            & 0.080          & 0.040            & 0.043          & 0.007            & 0.005          \\
          &      & 10 & 0.081            & 0.082          & 0.033            & 0.030          & 0.006            & 0.004          &                      &                      & 10                   & 0.070            & 0.069          & 0.038            & 0.038          & 0.007            & 0.006          \\
          &      & 15 & 0.074            & 0.079          & 0.034            & 0.039          & 0.003            & 0.005          &                      &                      & 15                   & 0.070            & 0.078          & 0.022            & 0.027          & 0.003            & 0.002          \\
          & 250  & 1  & 0.118            & 0.118          & 0.053            & 0.056          & 0.008            & 0.010          &                      & 250                  & 1                    & 0.097            & 0.096          & 0.046            & 0.049          & 0.010            & 0.011          \\
          &      & 5  & 0.092            & 0.096          & 0.049            & 0.049          & 0.010            & 0.013          &                      &                      & 5                    & 0.110            & 0.104          & 0.057            & 0.059          & 0.011            & 0.015          \\
          &      & 10 & 0.078            & 0.082          & 0.038            & 0.039          & 0.007            & 0.007          &                      &                      & 10                   & 0.066            & 0.076          & 0.030            & 0.029          & 0.004            & 0.004          \\
          &      & 15 & 0.088            & 0.089          & 0.034            & 0.035          & 0.005            & 0.006          &                      &                      & 15                   & 0.094            & 0.097          & 0.050            & 0.050          & 0.002            & 0.002          \\
          & 500  & 1  & 0.114            & 0.123          & 0.059            & 0.055          & 0.005            & 0.008          &                      & 500                  & 1                    & 0.091            & 0.094          & 0.048            & 0.053          & 0.012            & 0.011          \\
          &      & 5  & 0.087            & 0.094          & 0.04             & 0.038          & 0.004            & 0.004          &                      &                      & 5                    & 0.097            & 0.101          & 0.051            & 0.052          & 0.01             & 0.013          \\
          &      & 10 & 0.106            & 0.106          & 0.053            & 0.054          & 0.013            & 0.01           &                      &                      & 10                   & 0.086            & 0.088          & 0.036            & 0.037          & 0.005            & 0.003          \\
          &      & 15 & 0.100            & 0.099          & 0.058            & 0.054          & 0.011            & 0.012          &                      &                      & 15                   & 0.101            & 0.105          & 0.048            & 0.044          & 0.011            & 0.007          \\
          & 1000 & 1  & 0.104            & 0.108          & 0.043            & 0.045          & 0.010            & 0.011          &                      & 1000                 & 1                    & 0.115            & 0.114          & 0.056            & 0.054          & 0.014            & 0.016          \\
          &      & 10 & 0.105            & 0.107          & 0.059            & 0.060          & 0.008            & 0.008          &                      &                      & 10                   & 0.094            & 0.094          & 0.044            & 0.041          & 0.010            & 0.009          \\
          &      & 20 & 0.104            & 0.107          & 0.055            & 0.057          & 0.009            & 0.010          &                      &                      & 20                   & 0.098            & 0.099          & 0.046            & 0.045          & 0.009            & 0.008          \\
          &      & 30 & 0.112            & 0.104          & 0.056            & 0.058          & 0.015            & 0.014          &                      &                      & 30                   & 0.091            & 0.086          & 0.042            & 0.041          & 0.008            & 0.01           \\
          & 2000 & 1  & 0.098            & 0.096          & 0.043            & 0.045          & 0.004            & 0.005          &                      & 2000                 & 1                    & 0.107            & 0.113          & 0.066            & 0.063          & 0.016            & 0.016          \\
          &      & 10 & 0.096            & 0.096          & 0.058            & 0.058          & 0.009            & 0.009          &                      &                      & 10                   & 0.084            & 0.084          & 0.045            & 0.043          & 0.007            & 0.007          \\
          &      & 20 & 0.098            & 0.098          & 0.056            & 0.057          & 0.013            & 0.012          &                      &                      & 20                   & 0.093            & 0.093          & 0.045            & 0.046          & 0.006            & 0.005          \\
          &      & 30 & 0.107            & 0.107          & 0.053            & 0.055          & 0.009            & 0.008          &                      &                      & 30                   & 0.006            & 0.098          & 0.041            & 0.044          & 0.006            & 0.008          \\ \midrule
F         & 100  & 1  & 0.86             &0.108           & 0.051            & 0.055          & 0.012            & 0.016          & BS                   & 100                  & 1                    & 0.095            & 0.102          & 0.05             & 0.062          & 0.01             & 0.013          \\
          &      & 5  & 0.089            & 0.102          & 0.052            & 0.053          & 0.007            & 0.009          &                      &                      & 5                    & 0.08             & 0.081          & 0.036            & 0.029          & 0.005            & 0.005          \\
          &      & 10 & 0.07             & 0.069          & 0.03             & 0.032          & 0.007            & 0.010          &                      &                      & 10                   & 0.075            & 0.075          & 0.029            & 0.036          & 0.003            & 0.004          \\
          &      & 15 & 0.062            & 0.074          & 0.028            & 0.026          & 0.003            & 0.003          &                      &                      & 15                   & 0.07             & 0.077          & 0.026            & 0.033          & 0.006            & 0.009          \\
          & 250  & 1  & 0.09             & 0.101          & 0.052            & 0.052          & 0.005            & 0.006          &                      & 250                  & 1                    & 0.099            & 0.095          & 0.044            & 0.05           & 0.005            & 0.008          \\
          &      & 5  & 0.074            & 0.08           & 0.031            & 0.032          & 0.007            & 0.008          &                      &                      & 5                    & 0.084            & 0.086          & 0.047            & 0.046          & 0.012            & 0.012          \\
          &      & 10 & 0.088            & 0.083          & 0.038            & 0.04           & 0.003            & 0.003          &                      &                      & 10                   & 0.103            & 0.107          & 0.05             & 0.056          & 0.012            & 0.01           \\
          &      & 15 & 0.078            & 0.087          & 0.039            & 0.038          & 0.006            & 0.007          &                      &                      & 15                   & 0.08             & 0.083          & 0.044            & 0.042          & 0.006            & 0.005          \\
          & 500  & 1  & 0.099            & 0.096          & 0.048            & 0.046          & 0.012            & 0.01           &                      & 500                  & 1                    & 0.12             & 0.117          & 0.057            & 0.055          & 0.008            & 0.01           \\
          &      & 5  & 0.102            & 0.097          & 0.051            & 0.048          & 0.01             & 0.012          &                      &                      & 5                    & 0.105            & 0.113          & 0.051            & 0.052          & 0.012            & 0.012          \\
          &      & 10 & 0.088            & 0.093          & 0.045            & 0.04           & 0.008            & 0.01           &                      &                      & 10                   & 0.101            & 0.116          & 0.043            & 0.043          & 0.007            & 0.008          \\
          &      & 15 & 0.093            & 0.092          & 0.041            & 0.043          & 0.009            & 0.005          &                      &                      & 15                   & 0.101            & 0.101          & 0.036            & 0.042          & 0.003            & 0.005          \\
          & 1000 & 1  & 0.117            & 0.113          & 0.062            & 0.062          & 0.019            & 0.019          &                      & 1000                 & 1                    & 0.1              & 0.098          & 0.047            & 0.047          & 0.011            & 0.011          \\
          &      & 10 & 0.097            & 0.09           & 0.045            & 0.046          & 0.011            & 0.01           &                      &                      & 10                   & 0.09             & 0.088          & 0.035            & 0.038          & 0.005            & 0.008          \\
          &      & 20 & 0.110            & 0.109          & 0.054            & 0.051          & 0.01             & 0.012          &                      &                      & 20                   & 0.097            & 0.101          & 0.052            & 0.05           & 0.013            & 0.013          \\
          &      & 30 & 0.085            & 0.081          & 0.047            & 0.045          & 0.007            & 0.006          &                      &                      & 30                   & 0.096            & 0.096          & 0.039            & 0.039          & 0.006            & 0.007          \\
          & 2000 & 1  & 0.115            & 0.11           & 0.071            & 0.07           & 0.016            & 0.016          &                      & 2000                 & 1                    & 0.098            & 0.103          & 0.044            & 0.049          & 0.013            & 0.011          \\
          &      & 10 & 0.104            & 0.108          & 0.063            & 0.06           & 0.011            & 0.014          &                      &                      & 10                   & 0.087            & 0.087          & 0.043            & 0.046          & 0.011            & 0.01           \\
          &      & 20 & 0.1              & 0.095          & 0.047            & 0.046          & 0.004            & 0.005          &                      &                      & 20                   & 0.08             & 0.075          & 0.035            & 0.037          & 0.01             & 0.009          \\
          &      & 30 & 0.1              & 0.099          & 0.047            & 0.046          & 0.006            & 0.004          &                      &                      & 30                   & 0.101            & 0.103          & 0.048            & 0.046          & 0.01             & 0.011          \\ \bottomrule
\end{tabular}
\end{adjustbox}
\end{table}

\subsection{Functional time series models}\label{sec-fts-model}
We now turn our attention to some examples of serially dependent functional time series.  We consider several models here, the first two in this section for the purpose of performing simulation experiments, and the second two presented in Section \ref{sec-applicatipn} are used for the purpose of time series modelling and residual analyses in two real data examples.  

The time series models that we considered in our simulation experiments are from the general class of functional autoregressive models of order $p$ (FAR($p$)), which take the form 
\[
	X_i(t) = \sum_{j=1}^p\Phi_j(X_{i-p})(t) + \epsilon_i(t) = \sum_{j=1}^p\int \phi_j(t,s)X_{i-j}(s)ds + \epsilon_i(t), 
\]
where $\phi_j(t,s)$ are the kernels of  the integral operators $\Phi_j$, and $\epsilon_i$ is a sequence of iid mean zero functional processes. In our examples, we use the Gaussian kernel
\[
\phi(t,s) = c \exp\left(-\frac{t^2+s^2}{2}\right),
\]
where the constant $c$ is chosen in order that $\norm{\phi} =  |S|\in[0,1]$, where $S= \operatorname{sign}(c)\norm{\phi}$. As such the parameter $S$ describes both the magnitude and ``sign" of the kernel $\phi$. We considered the processes
\begin{itemize}
	\item FAR(1, $S$): $X_i(t) =\int \phi_1(t, s) X_{i-1}(s) d s + \varepsilon_i(t),$  where $\varepsilon_i(t)$ follows the Brownian bridge WN process (BB).
	\item FAR(2, $S_1$, $S_2$): $X_i(t) =  \int \phi_1(t, s) X_{i-1}(s) d s + \int \phi_2(t, s) X_{i-2}(s) d s+\varepsilon_i(t)$, where again $\varepsilon_i(t)$ follows the Brownian bridge WN process (BB).
\end{itemize}

\begin{figure}[ht!]
	\centering
	\begin{subfigure}[b]{0.235\textwidth}
		\centering
		\includegraphics[width=\textwidth]{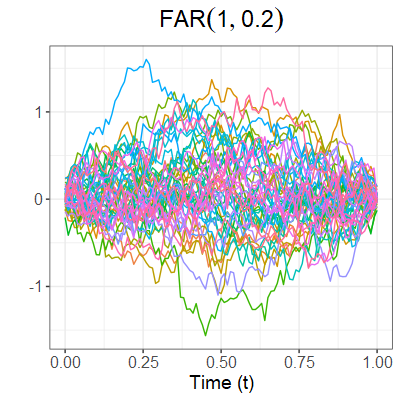}
	\end{subfigure}
	\hfill
	\begin{subfigure}[b]{0.235\textwidth}
		\centering
		\includegraphics[width=\textwidth]{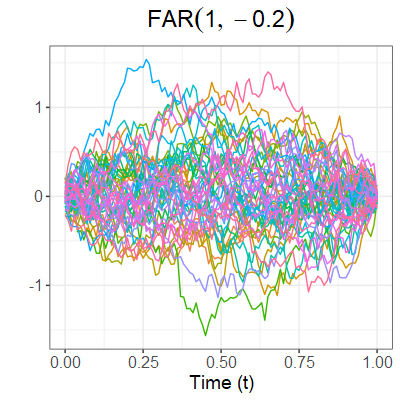}
	\end{subfigure}
	\hfill
	\begin{subfigure}[b]{0.235\textwidth}
		\centering
		\includegraphics[width=\textwidth]{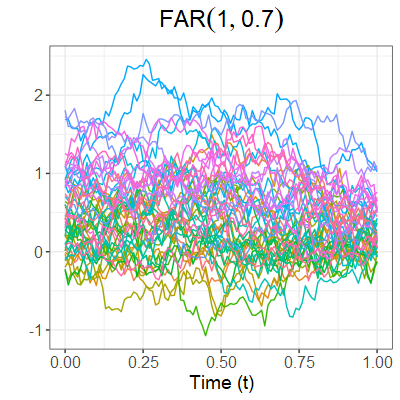}
	\end{subfigure}
	\hfill
	\begin{subfigure}[b]{0.235\textwidth}
		\centering
		\includegraphics[width=\textwidth]{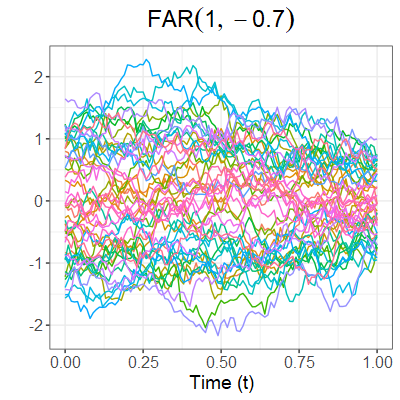}
	\end{subfigure}
	\hfill
	\begin{subfigure}[b]{0.235\textwidth}
		\centering
		\includegraphics[width=\textwidth]{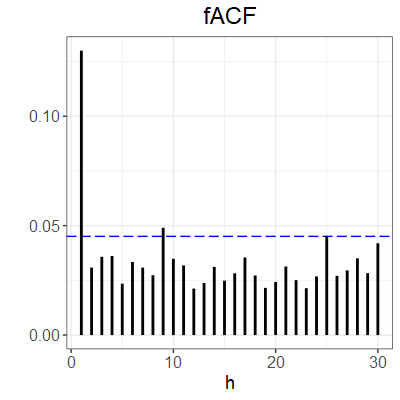}
	\end{subfigure}
	\hfill
	\begin{subfigure}[b]{0.235\textwidth}
		\centering
		\includegraphics[width=\textwidth]{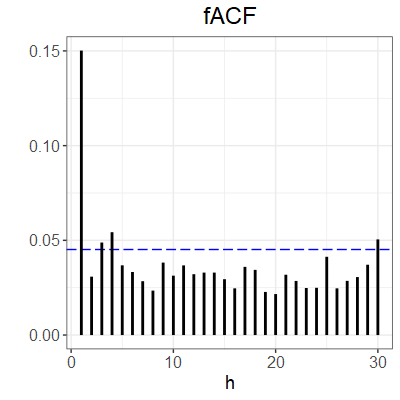}
	\end{subfigure}
	\hfill
	\begin{subfigure}[b]{0.235\textwidth}
		\centering
		\includegraphics[width=\textwidth]{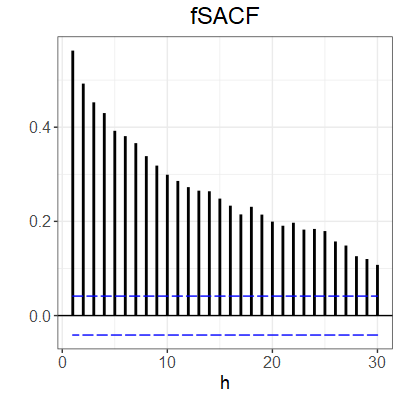}
	\end{subfigure}
	\hfill
	\begin{subfigure}[b]{0.235\textwidth}
		\centering
		\includegraphics[width=\textwidth]{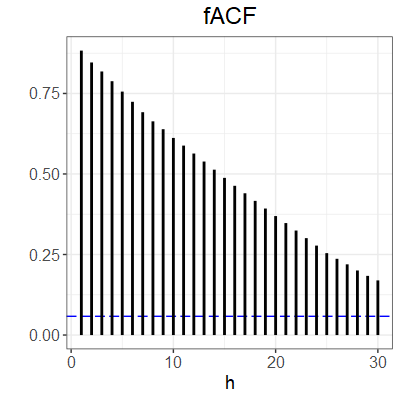}
	\end{subfigure}
	\hfill
	\begin{subfigure}[b]{0.235\textwidth}
		\centering
		\includegraphics[width=\textwidth]{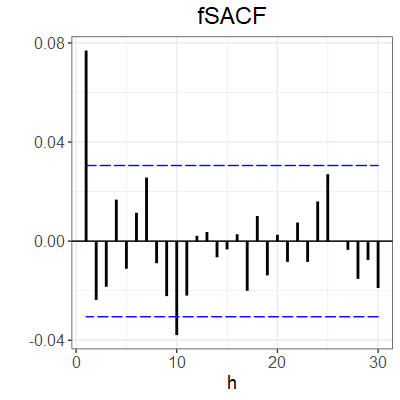}
	\end{subfigure}
	\hfill
	\begin{subfigure}[b]{0.235\textwidth}
		\centering
		\includegraphics[width=\textwidth]{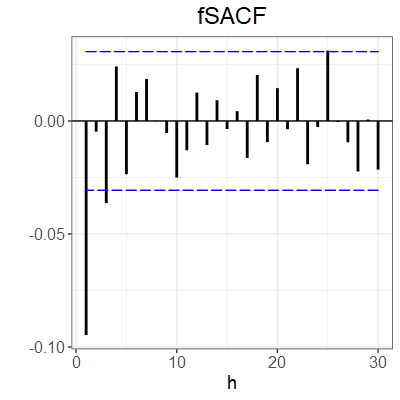}
	\end{subfigure}
	\hfill
	\begin{subfigure}[b]{0.235\textwidth}
		\centering
		\includegraphics[width=\textwidth]{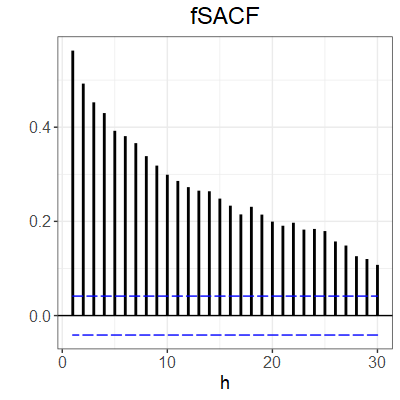}
	\end{subfigure}
	\hfill
	\begin{subfigure}[b]{0.235\textwidth}
		\centering
		\includegraphics[width=\textwidth]{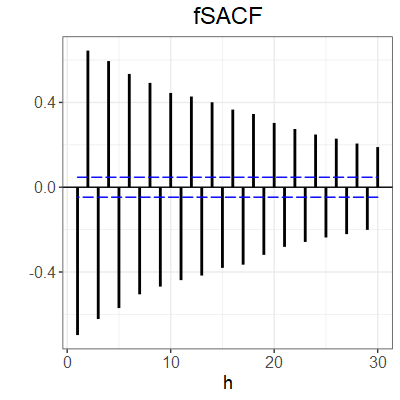}
	\end{subfigure}
	\caption{1000 curves simulated from $\operatorname{FAR}(1,S)$ models with different scaling parameter $S$. \textbf{Top}: 50 iid curves simulated based on each model; \textbf{Middle}: The fACF of each the processes with 95\% confidence bound (in blue)  \textbf{Bottom}: The fSACF of each the processes with 95\% confidence bounds \eqref{conf-def} (in blue).}\label{fig-dependent}
\end{figure}

Figure \ref{fig-dependent} illustrates simulated series of length 1000 from the FAR(1,$S$) process for various values of the parameter $S$, along with corresponding plots of the fACF of \cite{mestre:2021:facf} and the fSACF. This figure highlights one of the advantages of the fSACF compared to the fACF, in that when the parameter $S$ is negative, the  fACF displays a similar behaviour as if the parameter were positive, whereas the fSACF in this case shows an alternating pattern.   

To study the empirical power of the proposed confidence interval and portmanteau test in \eqref{conf-def} and \eqref{port-test}, we applied our tests to 1000 independently generated FAR(1, $S$)  processes of various lengths $n\in \{ 100,250,500,1000\}$ and for increasing  values of $S$. Figure \ref{fig-power} shows the empirical rejection rate of the hypotheses $\mathcal{H}^\prime_{0,H}$ for $H=1,10$ that we observed among the 1000 independent simulations with significance level set at $\alpha=0.05$ in terms of power curves that are a function of $S$. We note that the rejection rate of $\mathcal{H}^\prime_{0,1}$ at level 0.05 is equivalent to the rate at which the confidence interval \eqref{conf-def} does not contain $\hat{\rho}_1$. We observed that the tests appeared to have empirical size close to the nominal size, and that expectedly as $S$ increased to 1, the power of each test also increased to 1. The power was observed to increase more slowly for larger $H$, which we attribute to the fact that the FAR(1,$S$) process has decreasing correlation as a function of the lag.  It was also clear that the empirical power of the test increased as the sample size $n$ increased even when $S$ was small, and the empirical power reached one when the dependence is around $0.45$ regardless of the sample size.

We also performed some simulations to highlight the use of the fSACF in evaluating the goodness-of-fit of functional time series models through residual analysis. Given a model capable of producing fitted/forecasted values of the series $\hat{X}_i(t)$, the residuals are calculated simply as
\begin{equation}\label{eq-fit-resid}
	r_i(t) = X_i(t) - \hat{X}_i(t).
\end{equation}

We considered performing residual analyses of FAR models using an estimator derived from functional principal component analysis. For the FAR(1) model, the kernel $\phi$ may be estimated using a least squares principle by the first $J$ FPCs:
\begin{align}\label{FAR-est}
\hat{\phi}_1(t, s)=\frac{1}{n-1} \sum_{k=1}^{n-1} \sum_{j=1}^J \sum_{i=1}^J \hat{\lambda}_{j}^{-1}\left\langle X_k, \hat{v}_{j}\right\rangle\left\langle X_{k+1}, \hat{v}_{i}\right\rangle \hat{v}_{j}(s) \hat{v}_{i}(t),
\end{align}
where the $\hat{\lambda}_j$ and $\hat{v}_j(\cdot)$ are the estimated eigenvalues and eigenfunctions of the covariance operator of the observations; see e.g. \cite{shang:pca:2014}. Such an estimator may also be constructed for the kernels in general FAR($p$) models, and we provide the details in Appendix \ref{app-FSAR}. The fitted values are then calculated in this case  as 
$$
\hat{X}_i(t) = \int \hat{\phi}_1(t,s)X_{i-1}(s)ds. 
$$

There are a number of ways to determine the number of FPCs $J$ to use, including information criteria
\cite{shibata:1981:selection,yao:2007:selection}, resampling methods \cite{bathia:2010:identifying,hall:vial:2006:assessing} and cross-validation. Here, we use the  cumulative percentage of total variance (CPV)
\[
\operatorname{CPV}(J)=\frac{\sum_{k=1}^J \hat{\lambda}_k}{\sum_{k=1}^n \hat{\lambda}_k},
\]
and choose the smallest $J$ for which $CPV(J)>90\%$. We used the \textbf{fda} package to carry out functional principal component analysis; see \cite{hooker:2009}.   
 
 Figure \ref{fig-misfit} shows three fSACF's: the left most is computed from simulated FAR(2,$S_1$,$S_2$) processes, whereas the middle and right figures are computed from the residuals of an FAR(2) and FAR(1) model fit, respectively. It is clear based on these plots that in each case there remains significant autocorrelation in the residuals based on the FAR(1) model, as expected, while the residuals of the FAR(2) model appear to be reasonably white. 

\begin{figure} [tbp]
	\centering
	\begin{subfigure}[b]{0.45\textwidth}
		\centering
		\includegraphics[width=\textwidth]{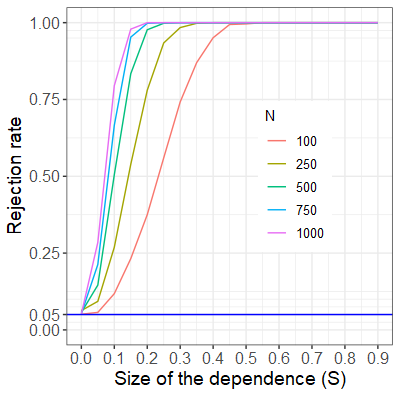}
	\end{subfigure}
	\hfill
	\begin{subfigure}[b]{0.45\textwidth}
		\centering
		\includegraphics[width=\textwidth]{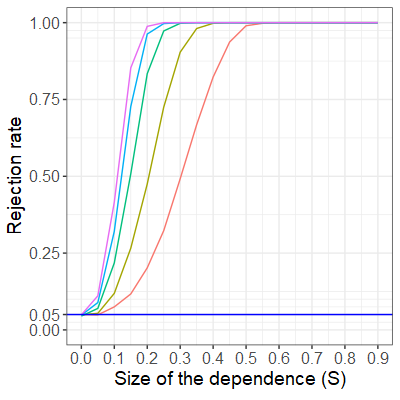}
	\end{subfigure}
\caption{Empirical rejection rate of the portmanteau test under $\hip_{0,H}$ of data generated
according to the FAR(1,S) model based on 1000 independent simulations. \textbf{Left}: H = 1; \textbf{Right}: H = 10. The nominal level $\alpha$ is set to be 0.05.}\label{fig-power}
\end{figure}

\begin{figure} [tbp]
	\centering
	\begin{subfigure}[b]{0.30\textwidth}
		\centering
		\includegraphics[width=\textwidth]{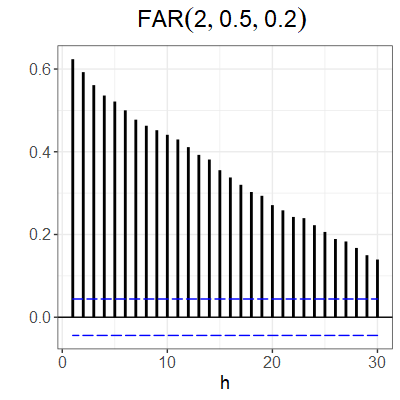}
	\end{subfigure}
	\hfill
	\begin{subfigure}[b]{0.30\textwidth}
		\centering
			\includegraphics[width=\textwidth]{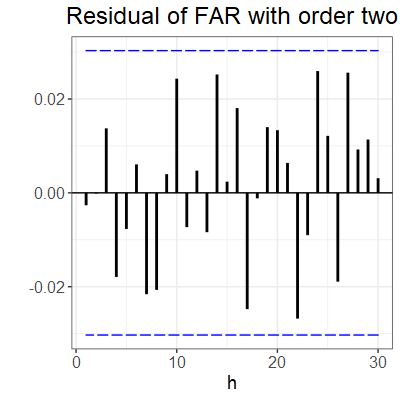}
	\end{subfigure}
	\hfill
	\begin{subfigure}[b]{0.30\textwidth}
		\centering
		\includegraphics[width=\textwidth]{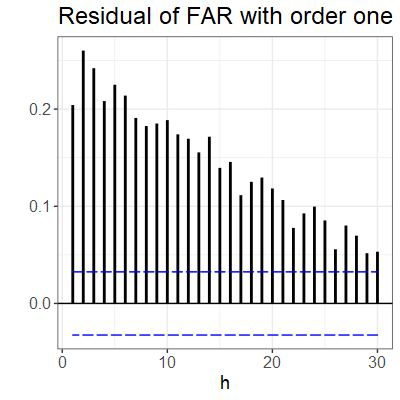}
	\end{subfigure}
	\hfill
	\begin{subfigure}[b]{0.30\textwidth}
		\centering
		\includegraphics[width=\textwidth]{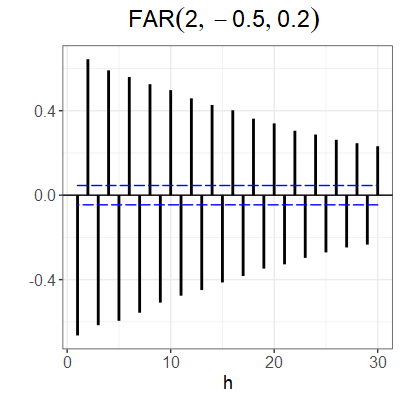}
	\end{subfigure}
	\hfill
	\begin{subfigure}[b]{0.30\textwidth}
		\centering
		\includegraphics[width=\textwidth]{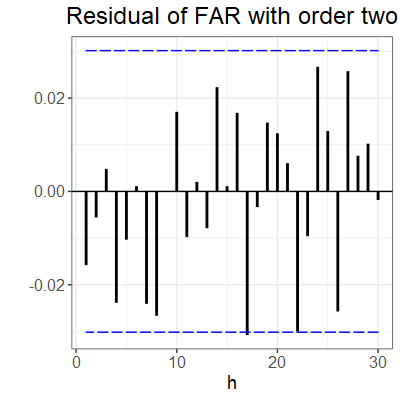}
	\end{subfigure}
	\hfill
	\begin{subfigure}[b]{0.30\textwidth}
		\centering
		\includegraphics[width=\textwidth]{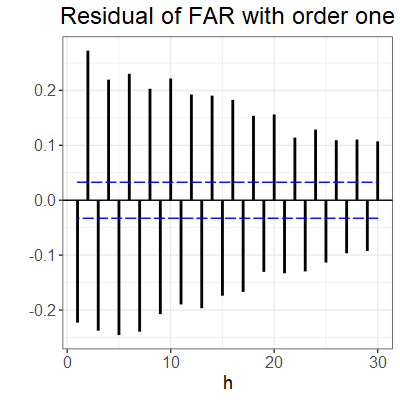}
	\end{subfigure}
	\hfill
	\caption{1000 simulated curves from $\operatorname{FAR}(2,S_1, S_2)$ models with different scaling parameters $S_1$ and $S_2$ with the error process being a Brownian bridge. \textbf{Left}: The fSACF of observations simulated from the FAR(2, $S_1$, $S_2$) models with different coefficients of dependence $S_1$ and $S_2$. \textbf{Middle}: The fSACF of the residuals $r_i$ from an FAR(2) model. \textbf{Right}: The fSACF of the residuals $r_i$ from an FAR(1) model, which in this case is evidently under specified.}\label{fig-misfit}
\end{figure}

\section{Applications}\label{sec-applicatipn}
In this section, we demonstrate the proposed methodology in two  data applications. The first is concerned with fitting forecasting models and evaluating their goodness-of-fit with Spanish electricity price curves. We now introduce the two forecasting models that we consider.

1) Functional seasonal autoregressive model (FSAR)  \cite{horvath:2020:forward-curve}: The model is given as
\begin{equation*}
	X_{i}(t)= \int\phi_{\ell_1}(t,s)X_{i-\ell_{1}}(s)ds+ \dots+\int \phi_{\ell_p}(t,s)X_{i-\ell_p}(s)ds+\varepsilon_i(t),
\end{equation*}
where $\ell_1,...,\ell_p$ are user selected lags. It is clear that the FSAR models are a special cases of general FAR models. We show in  Appendix \ref{app-FSAR} how estimators for the kernels in this model may be constructed based on functional principal component analysis and a least squares principle akin to \eqref{FAR-est}.

\emergencystretch 3em
2) The forecasting model that we consider is the Hyndman-Ullah method \cite{hyndman:ullah:2007:forecast}, which we abbreviate as the ``HU" model: The HU($J$) is a model that is potentially nonlinear and non-stationary that makes use of a truncated Karhunen–Lo\`{e}ve expansion. It supposes that the underlying processes may be well-approximated by its projection onto the first $J$ eigenfunctions of the covariance operator  
\[
	X_i^{(J)}(t) = \hat{m}(t) + \sum_{j=1}^J {\xi}_{i, j} \hat{v}_{j}(t),
\]
where the $\hat{m}$ is the standard sample mean function, ${\xi}_{i,j}$ is referred to as the principal component score, or simply the score, and the $\hat{v}_{j}(\cdot)$ are the eigenfunctions of the sample covariance operator. 

There are three steps to fit and forecast with HU($J$) models: 
\begin{enumerate}
	\item Each $X_i$ is approximated as $X_i(t) \approx \hat m(t) + \sum_{j=1}^J  {\xi}_{i,j} \hat{v}_j(t)$. $J$ may be selected according to a number of criteria, e.g. general cross validation,  information criterion,  cumulative percentage of total variance, or goodness-of-fit testing.
	\item To each of the FPC series, $ {\xi}_{i,j},i=1,...,n$, $j=1,\dots,J$,  we fit a, potentially non-stationary, scalar time series model. Notably the models can be distinct for different score series. Here, we fit SARIMA models, see e.g. \cite{shumway:stoffer:2017:ts}. These models may be used to compute forecasts and fitted values of the score series $\hat{\xi}_{ij}$.
	\item Forecasts or fitted values are constructed as $\hat{X}_i(t) = \hat m(t) + \sum_{j=1}^J  \hat{{\xi}}_{i,j} \hat{v}_j(t)$, and residual curves may be computed as $r_i(t) = X_i(t) - \hat{X_i}(t) = X_i(t) - \sum_{i=1}^J \hat{\xi}_{ij} \hat{v}_j(t)$.
\end{enumerate}

\subsection{Spanish electricity data}
The first application we consider is Spanish electricity price data observed hourly from January 1st, 2014, to December 31st, 2014. The data was originally provided by the Spanish electricity market operator, and are also available in the R \textbf{fdaACF} package \cite{fdaACF:package}. \cite{gonzalez:2017:ARMAX} studied fitting and forecasting these data using an autoregressive–moving-average model with exogenous inputs (ARMAX), and \cite{mestre:2021:facf} uses the same dataset to illustrate the properties of the fACF. We use $\{Y_i(t),~i=1,2,\dots,365, ~ t\in[1,24]\}$ to denote the linearly interpolated Spanish electricity price curves, and these are displayed in  the top left panel of Figure \ref{fig-electricity}. Although each curve generally evolves according to a standard daily pattern of low prices in the early morning hours, and higher prices during the times of peak demand, we also observe that there is a strong weekly and seasonal trend to the series over the course of the year. The corresponding fACF and fSACF presented in the middle and right of the top row of Figure \ref{fig-electricity} both indicate such a pattern. To remove the trend component, we consider the pointwise differenced series  $D_i(t)=Y_i(t) - Y_{i-1}(t)$. The bottom row of Figure \ref{fig-electricity} displays the differenced series along with the corrsponding fACF and fSACF plots. The differenced series appears roughly mean stationary, and has prominent autocorrelation corresponding to the weekly lag $h=7$. In particular, the fSACF indicates that the correlation at lags one and two are ``negative", while the correlation at weekly lags is ``positive", which is to be expected. In contrast, the fACF does not capture this pattern. 

We then fit FSAR and HU($J$) models to the differenced series $D_i$, and examined their goodness-of-fit using a residual analysis based on the fSACF.  For the FSAR model, we chose lags  $1$ and $7$, as we see the data exhibited both lag $1$ and weekly autocorrelation, and selected the model dimension using the CPV criterion explaining 90\% of the total variation. We also considered HU($J$) models for $J \in \{1,...,5\}$, and to each resulting scalar score series we fit a SARIMA model using the automated model selection method in \cite{hyndman:khandakar:2008}. In particular, this lead to SARIMA models with orders $(1,0,2)\times(2,0,0)_7$ , $(1,0,2)\times(2,0,0)_7$, $(2,0,2)\times(2,0,0)_7$, $(2,0,1)\times(2,0,1)_7$ and $(2,0,2)\times(2,0,0)_7$ for the first five univariate score series, respectively, where $(a,b,c)\times(A,B,C)_s$ describes a SARIMA model with seasonal lag $s$, standard and seasonal differencing $b$ and $B$,   standard and seasonal autoregressive orders $a$ and $A$, and standard and seasonal moving average orders of $c$ and $C$.  Figure \ref{fig-electricity-fit} shows the fSACF of the residuals of these models. We observed that the FSAR model does not fit the data well, as strong autocorrelation is still present in the residuals at low lags as well as in the weekly lag. This did not improve by increasing the CPV value up to 99\%. As for the HU$(J)$ model with $J \in \{3,4,5\}$, most of the fSACF values fall within the 95\% confidence bounds coloured in blue. We additionally computed the P-values of tests of $\mathcal{H}^\prime_{0,H}$ as a function of $H$ based on the statistic  $\hat{Q}_{364,H}$ applied to the residuals of each model in Figure \ref{fig-electricity-fit}. These suggest again that the FSAR fit and HU(1) fit are poor. As for HU(3), these P-values for small lags were slightly above 0.05, while for HU(5)  the P-values were for most lags above the 0.05 threshold. This suggests that, although there may still be some weak, lingering autocorrelation in the residuals of the HU(5) model at the weekly lag, this model generally appears to fit well.  

\begin{figure}[ht!]
	\centering
	\begin{subfigure}[b]{0.3\textwidth}
		\centering
		\includegraphics[width=\textwidth]{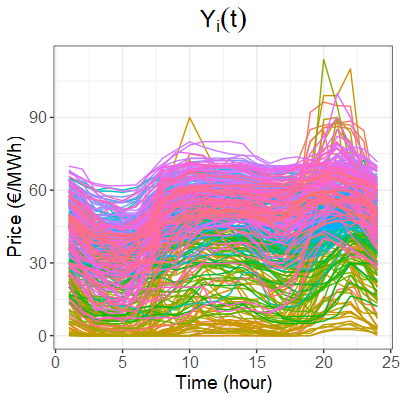}
	\end{subfigure}
	\hfill
	\begin{subfigure}[b]{0.3\textwidth}
		\centering
		\includegraphics[width=\textwidth]{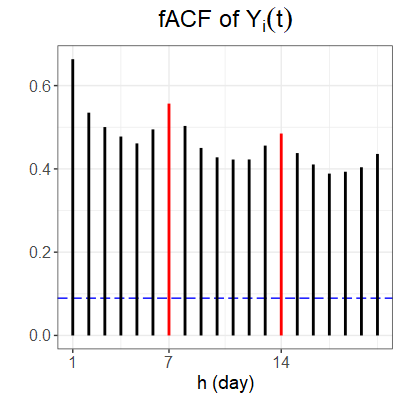}
	\end{subfigure}
	\hfill
	\begin{subfigure}[b]{0.3\textwidth}
		\centering
		\includegraphics[width=\textwidth]{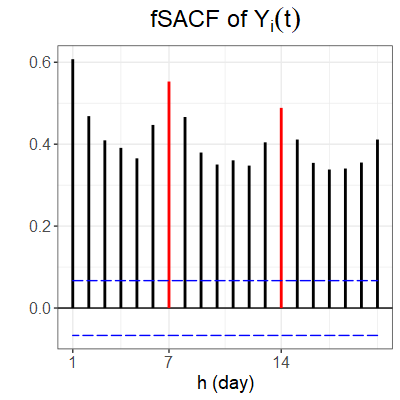}
	\end{subfigure}
	\hfill 
	\begin{subfigure}[b]{0.3\textwidth}
		\centering
		\includegraphics[width=\textwidth]{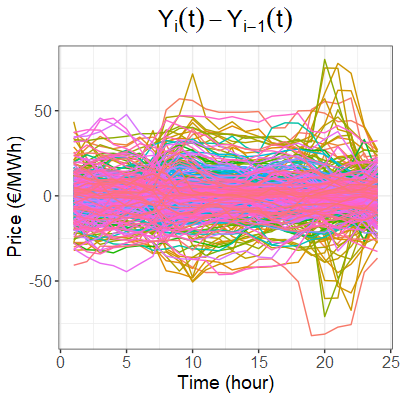}
	\end{subfigure}
	\hfill
	\begin{subfigure}[b]{0.3\textwidth}
		\centering
		\includegraphics[width=\textwidth]{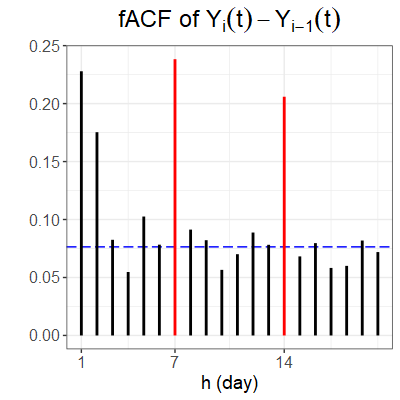}
	\end{subfigure}
	\hfill 
	\begin{subfigure}[b]{0.3\textwidth}
		\centering
		\includegraphics[width=\textwidth]{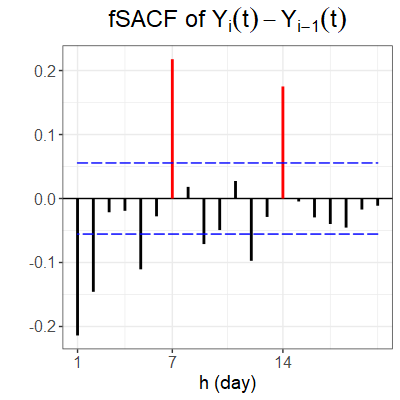}
	\end{subfigure}
	\caption{\textbf{Left}: Top is the hourly observed annual Spanish electricity data $\{Y_i(t),i=1,\dots,365,t=1,\dots,24\}$ from January 1st 2014 to December 31 sgemented by day. Bottom is the differenced curves $\{Y_i(t)-Y_{i-1}(t), t\in[0,24]\}$; \textbf{Middle}: The fACF of $Y_i(t)$ (top) and $Y_i(t)-Y_{i-1}(t)$ (bottom), respectively;  \textbf{Right}: The fSACF of $Y_i(t)$ (top) and $Y_i(t)-Y_{i-1}(t)$ (bottom), respectively. The blue lines in the middle and right panels are the 95\% confidence interval under the null hypothesis $\mathcal{H}_{0,h}$, for $h=1,\dots,20$.}\label{fig-electricity}
\end{figure}

\begin{figure}[ht!]
	\centering
	\begin{subfigure}[b]{0.3\textwidth}
		\centering
		\includegraphics[width=\textwidth]{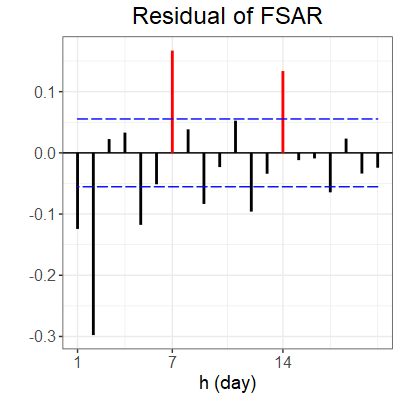}
	\end{subfigure}
	\hfill
	\begin{subfigure}[b]{0.3\textwidth}
		\centering
		\includegraphics[width=\textwidth]{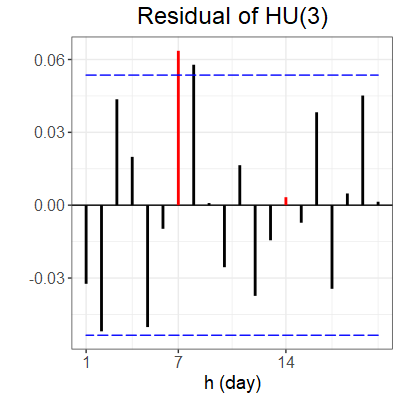}
	\end{subfigure}
	\hfill
	\begin{subfigure}[b]{0.3\textwidth}
		\centering
		\includegraphics[width=\textwidth]{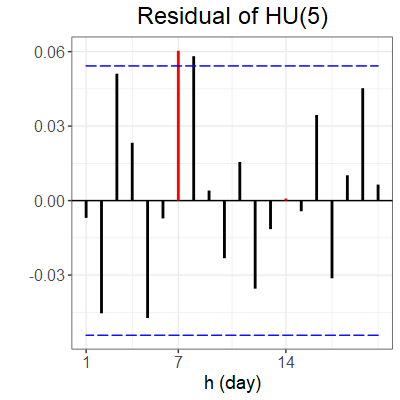}
	\end{subfigure}
	\hfill
	\begin{subfigure}[b]{0.8\textwidth}
		\centering
		\includegraphics[width=\textwidth]{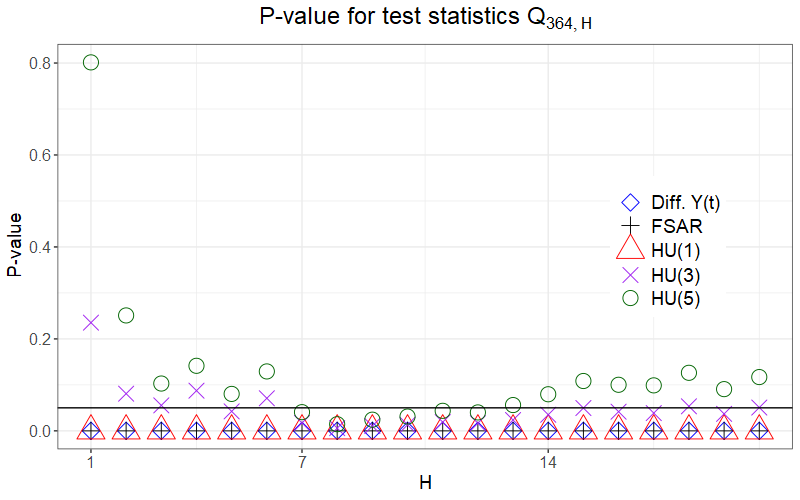}
	\end{subfigure}
	\caption{\textbf{Left}: The fSACF of the residual $r_i$ after fitting by the FSAR model; using the ; \textbf{Middle}: The fSACF of the residual $r_i$ after fitting the data using the Hyndman-Uallh model with three FPCs; \textbf{Right}: The fSACF of the residual $r_i$ after fitting the data using the Hyndman-Uallh model with three FPCs;
		\textbf{Bottom} The P-values of test statistics $\hat{Q}_{364,H}$ under $\mathcal{H}_{0,H}^\prime$ of residual using FSAR model, residual using HU model with three and five FPCs, and the differenced series $Y_i(t) - Y_{i-1}(t)$, respectively.} \label{fig-electricity-fit}
\end{figure}

\subsection{Densely observed intraday asset price data}
The second application that we considered was to the densely observed intraday asset price data used in \cite{kokoszka:2017:autocovariance-ftsa-conditional}. This dataset consists of financial asset prices, including commodity and currency exchange rates, obtained from the Chicago Mercer Exchange ({\textit{cmegroup.com}}), 
S\&P 500 index, and Apple stock price obtained from Nasdaq ({\textit{nasdaq.com}}). Each series that we consider is tabulated in  Table \ref{table:stock}. There are 78 daily observations spanning 249 trading days within the time span from Jan 02, 2014, to Dec 31, 2014.

\begin{table}[tbp]
\centering
\caption{Assets used in the study spanned from Jan 02, 2014, to Dec 31, 2014.}
\label{table:stock}
\begin{tabular}{@{}lll@{}}
	\toprule
	Class             & Notation & Description                   \\ \midrule
	Index             & S\&P 500 & Standard and Poor's 500       \\ 
	Currency exchange & EC       & Euro to US Dollar             \\
	Commodity futures & CL       & Light sweet crude oil futures \\
	Technology        & AAPL     & Apple Inc.                    \\ \bottomrule
\end{tabular}
\end{table}

Let $P_i(t)$ denote the intraday asset price of an asset on the $i$-th day at time $t$, which we obtained from the raw data using linear interpolation. On each trading day, the first return is recorded at 9:35, and the last one is at 16:00. We consider four functional time series derived from these data: 1) five minute log return sequence
\[
R_{i}(t)= \ln P_{i}(t)- \ln  P_{i}(t-5),
\]
2) the cumulative intraday returns (CIDRs)
\[
C_i(t) = \ln P_i(t) - \ln P_i(0),
\]
 3) the squared sequence of five minute log return curve $R_i^2(t) $, and 4) squared CIDR sequence $C_i^2(t)$. The CIDR may be thought of as a normalized, smoothed curve that interpolates the standard log-returns with intraday data, which can potentially have high volatility; see, \cite{ji:2021:CIDR,kokoszka:2019:CIDR,rice:2020:CIDR}.

Figure \ref{fig-stock} displays the S\&P 500 five minute log return $R_i(t)$, squared S\&P 500 log return $R_i^2(t)$, the CIDR $C_i(t)$, and the squared CIDR $C_i^2(t)$ in the first four days, respectively. It appears that both the log return curves $R_i(u)$ and CIDRs $C_i(t)$ are roughly stationary.   We then calculate the fSACF of each series, which is displayed in the case of light sweet crude oil futures (CL) in the the bottom row of Figure \ref{fig-stock}. We observed that  both the five minute log-return series and CIDR series appear based on their fSACFs to be white noise processes.  As for the squared curves, we observed that most series exhibited significant, although weak,  spherical autocorrelation, especially in early lags. The cumulative significance of the observed spherical autocorrelations was assessed using the portmanteau test of $\mathcal{H}^\prime_{0,10}$, the P-values of which are reported in Table \ref{tab-port}. The appearance of white noise of the original series coexisting with correlation in the squared series is the hallmark of GARCH type behaviour or volatility. Confirming this observation with the fSACF, which is more robust to outliers and lower order moments in the data generating process, helps confirm that the patterns in the autocorrealtion observed previously in these series is not simply the consequence of outliers.   

\begin{table}[tbp]
	\centering
	\caption{Value of the test statistics $\hat{Q}_{249,10}$ and associated P-values computed from \eqref{port-test} for: 1) five minute log returns $R_i(t)$, 2) squared five minute log returns $R_i^2(t)$, 3) CIDR $C_i(t)$, and 4) squared CIDR $C_i^2(t)$, for each asset considered.}
	\label{tab-port}
	\begin{adjustbox}{max width=\textwidth}
	\begin{tabular}{@{}llllllllll@{}}
		\toprule
		\multicolumn{2}{c}{Type} & \multicolumn{2}{c}{$R_i(t)$} & \multicolumn{2}{c}{$R_i^2(t)$} & \multicolumn{2}{c}{$C_i(t)$} & \multicolumn{2}{c}{$C_i^2(t)$} \\ \midrule
		&& $\hat{Q}_{249,10}$     & P-value     & $\hat{Q}_{249,10}$      & P-value      & $\hat{Q}_{249,10}$     & P-value     & $\hat{Q}_{249,10}$      & P-value      \\ \midrule
		S\&P 500 &                  & 3.0144         & 0.9811      & 31.0026         & 0.0006        & 6.2736         & 0.7918      & 12.9304         & 0.2276       \\
		EC &                        & 7.2702         & 0.6997      & 227.4773        & 0.0000            & 4.4191         & 0.9265      & 166.4873        & 0.0000            \\
		CL   &                      & 6.6572         & 0.7574      & 390.0302        & 0.0000            & 4.4574         & 0.9244      & 210.533         & 0.0000            \\
		AAPL  &                    & 5.8559         & 0.8272      & 24.0211         & 0.0080       & 6.8985         & 0.7350      & 7.7158          & 0.6566       \\ \bottomrule
	\end{tabular}
	\end{adjustbox}
\end{table}

\begin{figure}[tbp]
	\centering
	\begin{subfigure}[b]{0.235\textwidth}
		\centering
		\includegraphics[width=\textwidth]{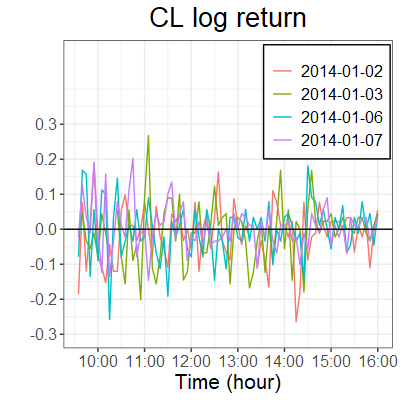}
	\end{subfigure}
	\hfill
	\begin{subfigure}[b]{0.235\textwidth}
		\centering
		\includegraphics[width=\textwidth]{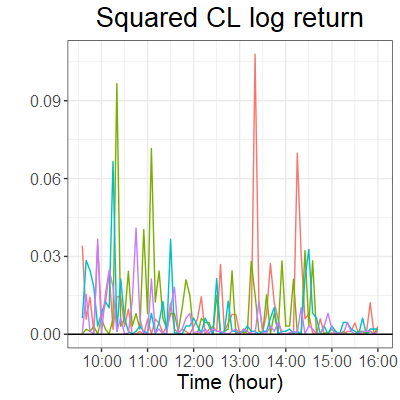}
	\end{subfigure}
	\hfill
	\begin{subfigure}[b]{0.235\textwidth}
		\centering
		\includegraphics[width=\textwidth]{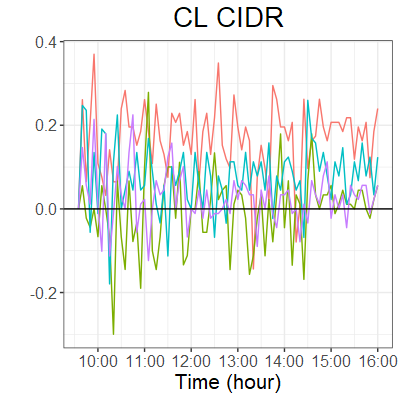}
	\end{subfigure}
	\hfill
	\begin{subfigure}[b]{0.235\textwidth}
		\centering
		\includegraphics[width=\textwidth]{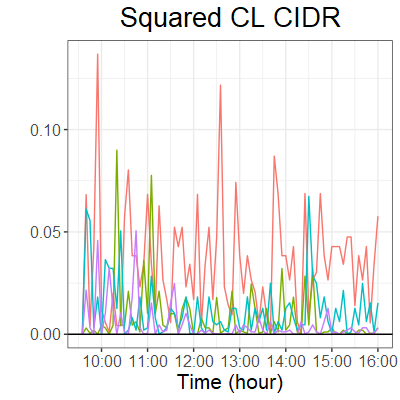}
	\end{subfigure}
	\hfill
	\begin{subfigure}[b]{0.235\textwidth}
		\centering
		\includegraphics[width=\textwidth]{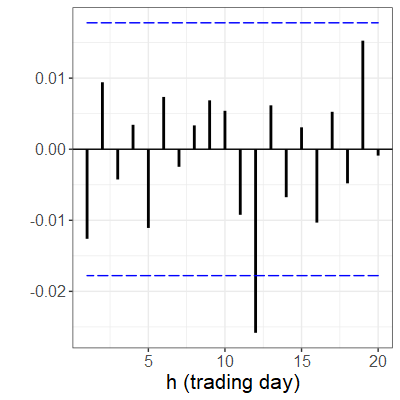}
	\end{subfigure}
	\hfill
	\begin{subfigure}[b]{0.235\textwidth}
		\centering
		\includegraphics[width=\textwidth]{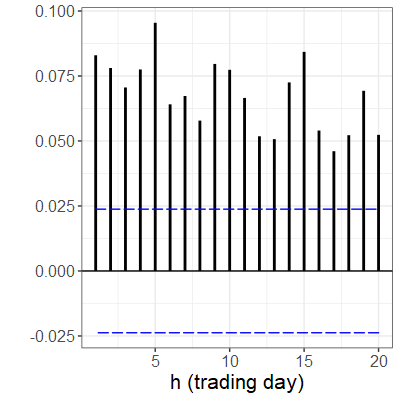}
	\end{subfigure}
	\hfill
	\begin{subfigure}[b]{0.235\textwidth}
		\centering
		\includegraphics[width=\textwidth]{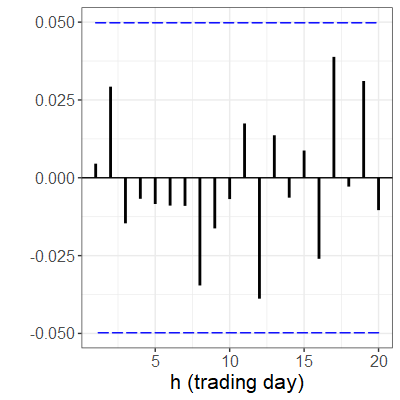}
	\end{subfigure}
	\hfill
	\begin{subfigure}[b]{0.235\textwidth}
		\centering
		\includegraphics[width=\textwidth]{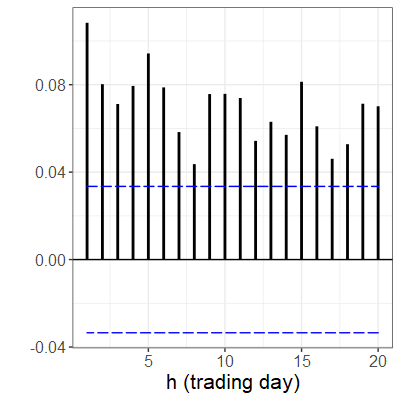}
	\end{subfigure}
	\caption{\textbf{Top}: samples of four curves from each of the four functional time series considered, constructed from the intraday light sweet crude oil futures prices (CL). \textbf{Bottom}: the empirical fSACF of each to these series along with the 95\% confidence interval \ref{conf-def}.  }\label{fig-stock}
\end{figure}

\section{Conclusion}\label{sec-conclusion}
We have put forward a new autocorrelation measure for functional time series: the functional spherical autocorrelation function (fSACF). Unlike existing methods in the literature that rely on measuring the size of the empirical autocovariance operators, the approach we propose is based on measuring the inner product between  projections of lagged pairs of the series onto the unit sphere. Some advantages conferred by this approach are its ability to capture the direction or ``sign" of the serial dependence in the series at a given lag, and it is more robust to outliers/lower order moments of the data generating process. 

A central limit theorem for the fSACF  was established under the assumption that the underlying process is a strong white noise, and mild additional conditions were also put forward under which this result remains true when the spatial median defining the fSACF is replaced with an estimator. These large sample results were used to derive confidence intervals and portmanteau tests for the fSACF. Through Monte Carlo simulations, we showed that these confidence intervals and tests work well in finite samples, highlighting that they also appear to be robust against outliers/low moments in the data generating mechanism, and can be useful in performing model selection with functional time series models. Two real data applications illustrated the use of the fSACF for model selection, and also helped confirm with a more robust approach the finding that intraday return curves derived from dense asset prices series appear to evolve as volatile white noise series. 

\setcounter{equation}{0}
\setcounter{figure}{0}
\setcounter{table}{0}
\makeatletter
\renewcommand{\theequation}{S\arabic{equation}}
\renewcommand{\thefigure}{S\arabic{figure}}
\renewcommand{\thetable}{S\arabic{table}}

\setcounter{section}{0}
\renewcommand{\thesection}{A.\arabic{section}}
\begin{appendices}
\section{Technical details and roofs}
Throughout these proofs, we assume without loss of generality  that the spatial median $\mu=0$. We use $C_i$, $i\ge 1$ to  denote unimportant positive numerical constants that may change from line to line.  

\subsection{Proof of Theorem 1}\label{app-thm1}
\begin{proof} We use the notation 
\begin{equation} 
	\sqrt{n}\tilde{\rho}_h = \frac{1}{\sqrt{n} }\sum_{i=1}^{n-h} \inp{S(X_i)}{S(X_{i+h})}=: \frac{1}{\sqrt{n} }\sum_{i=1}^{n-h} T_{i,h}.
\end{equation}
$\{T_{i,h}\}_{i \in \mathbb{Z}}$ forms a strictly stationary and $h-$dependent sequence of scalar random variables. By equation (2) of \cite{gervini:2008:robust-fpca} and the independence of the $X_i$'s, $\E T_{i,h}=0$. Using Fubini's theorem and the independence of $X_i$'s, we obtain that 
\begin{align*}
\E T_{i,h}^2 &= \E \inp{S(X_i )}{S(X_{i+h} )}^2 \\
&=  \int \hspace{-.3cm}\int \E S(X_i )(t)S(X_{i+h} )(t)S(X_i )(s)S(X_{i+h} )(s)dtds =\|C_P\|^2_2.
\end{align*}
Once again by equation (2) of \cite{gervini:2008:robust-fpca} and the independence of the $X_i$s, we obtain that for $i \ne j$, $\E T_{i,h}T_{j,h}=0$. Hence by Theorem 6.4.2 of \cite{brockwell:davis:1991:ts}, $\sqrt{n}\tilde{\rho}\stackrel{D}{\to} N(0,\|C_P\|^2_2).$ It follows by elementary arguments that for fixed $H$, $\sqrt{n}\|\tilde{R}_H  - {R}_H\|_E \stackrel{P}{\to}0,$ where 
$$
R_H = \frac{1}{n} \sum_{i=1}^n (T_{i,1},...,T_{i,H})^\top =: \frac{1}{n} \sum_{i=1}^n \bm{T}_{i,H}. 
$$
Similarly as above, the vector valued process $\{ \bm{T}_{i,H} \}_{i\in \mathbb{Z}} $ is strictly stationary, mean zero, and $H$ dependent, with $ \E \bm{T}_{i,H} \bm{T}_{i,H}^\top  = \|C_P\|^2 \mathbb{I}_H $, and  $ \E \bm{T}_{i,H} \bm{T}_{j,H}^\top  =0 $, if $i \ne j$. Hence by Proposition 11.2.2 of \cite{brockwell:davis:1991:ts}, $ \sqrt{n}\tilde{R}_H \stackrel{D}{\to} N_H(0,\|C_P\|^2 \mathbb{I}_H)$. It follows from the continuous mapping theorem that $\|\sqrt{n}\tilde{R}_H\|_E^2   \stackrel{D}{\to}  \|C_P\|^2  \chi^2(H).$ 
\end{proof}

\subsection{Proof of Theorem \ref{thm-est-white-con}}

\begin{lemma}\label{lem-sqnorm}
  Under Assumption \ref{as-moment-2}, if $\|u\| \le C_1$, then there exists a constant $C_3$ so that

  $$
  \E \frac{\|X_i\|^2}{\|X_i-u \|^2} \le C_3.
  $$
\end{lemma}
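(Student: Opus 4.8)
The plan is to reduce the claim to Assumption \ref{as-moment-2} by a single application of the triangle inequality. Recall that in these proofs $\mu = 0$, so Assumption \ref{as-moment-2} states that $\E \|X_i - u\|^{-2} \le C_2$ for every deterministic $u \in \HH$ with $\|u\| \le C_1$.

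First I would write $\|X_i\| \le \|X_i - u\| + \|u\|$, square, and use the elementary inequality $(a+b)^2 \le 2a^2 + 2b^2$ to obtain $\|X_i\|^2 \le 2\|X_i - u\|^2 + 2\|u\|^2$ pointwise. Dividing through by $\|X_i - u\|^2$ and using $\|u\| \le C_1$ gives the pointwise (in $\omega$) and uniform (in $u$) bound
\[
\frac{\|X_i\|^2}{\|X_i - u\|^2} \le 2 + \frac{2\|u\|^2}{\|X_i - u\|^2} \le 2 + \frac{2C_1^2}{\|X_i - u\|^2}.
\]
Taking expectations and invoking Assumption \ref{as-moment-2} then yields $\E \frac{\|X_i\|^2}{\|X_i - u\|^2} \le 2 + 2C_1^2 C_2$, so the conclusion holds with $C_3 = 2 + 2C_1^2 C_2$.

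There is essentially no obstacle; the only points worth stating carefully are that $u$ is a fixed element of $\HH$ (not random), so that Assumption \ref{as-moment-2} applies verbatim to the term $\E \|X_i - u\|^{-2}$, and that the resulting constant $C_3$ is independent of $i$ because the $X_i$ are identically distributed. If one wanted to be slightly more careful about the degenerate event $\{X_i = u\}$, it has probability zero under Assumption \ref{as-moment-2} (which forces $\|X_i - u\|^{-2}$ to be integrable), so the pointwise bound holds almost surely and the argument goes through unchanged.
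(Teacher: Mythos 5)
Your proof is correct, and it is cleaner than the one in the paper. Both arguments rest on exactly the same two ingredients --- the triangle inequality and Assumption \ref{as-moment-2} --- but you combine them differently. The paper bounds the ratio by splitting on the event $\{\|X_i-u\|>M\}$ for a threshold $M>2C_1$: on the complement it uses $\|X_i\|\le M+C_1$ together with Assumption \ref{as-moment-2}, and on the event itself it shows the ratio is bounded almost surely via $\|X_i\|/\|X_i-u\|\le \|X_i\|/(\|X_i\|-C_1)$. Your route replaces this case analysis with the single pointwise inequality $\|X_i\|^2\le 2\|X_i-u\|^2+2\|u\|^2$, which after dividing by $\|X_i-u\|^2$ and taking expectations immediately gives the explicit constant $C_3=2+2C_1^2C_2$. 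What your approach buys is brevity and an explicit, cleanly derived constant (the paper's bookkeeping of $C_5+[M+C_1]C_2$ is slightly loose --- the second term should really carry $(M+C_1)^2$); what the paper's approach buys is nothing extra here, so your version is preferable. Your closing remarks are also apt: $u$ is deterministic so Assumption \ref{as-moment-2} applies verbatim, and integrability of $\|X_i-u\|^{-2}$ rules out the degenerate event $\{X_i=u\}$, so the pointwise bound holds almost surely.
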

\begin{proof} For all $M>0$,
  \begin{align*}
  \E \frac{\|X_i\|^2}{\|X_i-u \|^2}  &=   \E \frac{\|X_i\|^2}{\|X_i-u \|^2} \mathds{1}\{ \|X_i-u \| > M\}+ \E \frac{\|X_i\|^2}{\|X_i-u \|^2} \mathds{1}\{ \|X_i-u \| \le M\} \\
  &\le \E \frac{\|X_i\|^2}{\|X_i-u \|^2} \mathds{1}\{ \|X_i-u \| > M\} + [M+C_1]C_2,
  \end{align*}
where in the second line we applied Assumption \ref{as-moment-2}, and used that $\|X_i-u \| \le M$ implies by the triangle inequality that $\|X_i\| \le M+C_1$.  Again by the triangle inequality $\|X_i\| - C_1 \le \|X_i - u\|$, which implies that so long as $\|X_i\|> C_1$,
$$\frac{\|X_i\|}{\|X_i-u \|} \le \frac{\|X_i\|}{\|X_i\| - C_1}.$$
Since $\|X_i - u \| \le \|X_i\| + C_1$, we have if $M> 2C_1$, then $ \|X_i-u \| > M$ implies that $\|X_i\|>C_1$. This implies then with $M> 2C_1$ that
$$\frac{\|X_i\|^2}{[\|X_i\| - C_1]^2} \mathds{1}\{ \|X_i-u \| > M\}  \le C_5$$
almost surely for a positive constant $C_5$. Hence Lemma \ref{lem-sqnorm} holds with $C_3 = C_5 + [M+C_1]C_2 $ for any $M> 2C_1$.
\end{proof}

Theorem \ref{thm-est-white-con} follows immediately from Theorem \ref{thm-1} and the following additional lemma. 
\begin{lemma} Under Assumptions \ref{as-tight} and \ref{as-moment-2}, $\sqrt{n} (\hat{\rho}_h - \tilde{\rho}_h) = o_P(1)$.
\end{lemma}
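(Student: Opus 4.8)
The plan is to write $\hat\rho_h-\tilde\rho_h=\tfrac1n\sum_{i=1}^{n-h}D_i(\hmu)$, where (recall $\mu=0$) $D_i(u):=\inp{S(X_i-u)}{S(X_{i+h}-u)}-\inp{S(X_i)}{S(X_{i+h})}$, to expand $D_i(u)$ to first order in $u$ about $0$, and to show that after multiplication by $\sqrt n$ both the linear part and the remainder are $o_P(1)$. Two elementary facts about the spatial sign are used throughout. First, for $x\neq 0$ and any $u\in\HH$ (with the convention $S(0)=0$), $\norm{S(x-u)-S(x)}\le 2\norm{u}/\norm{x}$. Second, $S$ is infinitely differentiable away from the origin, with $\norm{\mathrm{D}S(x)}\le\norm{x}^{-1}$, $\mathrm{D}S(x)$ self-adjoint, and $\norm{\mathrm{D}^2S(x)}\le\kappa_1\norm{x}^{-2}$; consequently the first-order remainder $\Delta_x(u):=S(x-u)-S(x)+\mathrm{D}S(x)[u]$ satisfies $\norm{\Delta_x(u)}\le\kappa_2\norm{u}^2/\norm{x}^2$ for \emph{every} $u$ (on $\norm{u}\le\norm{x}/2$ by Taylor's theorem, since then the segment from $x$ to $x-u$ stays in $\{\norm{\cdot}\ge\norm{x}/2\}$; on $\norm{u}>\norm{x}/2$ because the crude bound $3\norm{u}/\norm{x}$ is already dominated by $6\norm{u}^2/\norm{x}^2$).

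I would first record the consequences of the hypotheses. Applying Assumption \ref{as-moment-2} with $u=0$ gives $\E\norm{X_0}^{-2}<\infty$; in particular $P(X_0=0)=0$, so $\E S(X_0)=0$ by the spatial-median identity recalled below Definition \ref{sp-def}, and $\E\norm{X_0}^{-1}\le(\E\norm{X_0}^{-2})^{1/2}<\infty$. Since a strong white noise is i.i.d.\ and hence ergodic, the mean ergodic theorem gives that $\tfrac1n\sum_{i\le n}\norm{X_i}^{-2}$ and $\tfrac1n\sum_{i\le n}\norm{X_i}^{-1}\norm{X_{i+h}}^{-1}$ converge a.s.\ to finite limits, so both averages are $O_P(1)$. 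Assumption \ref{as-tight} says $\sqrt n\,\hmu=\sqrt n(\hmu-\mu)$ is $O_P(1)$, whence $\hmu=o_P(1)$ and $\sqrt n\,\norm{\hmu}^2=(\sqrt n\norm{\hmu})\norm{\hmu}=O_P(1)\,o_P(1)=o_P(1)$.

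Next, split $D_i(u)=\inp{S(X_i-u)-S(X_i)}{S(X_{i+h})}+\inp{S(X_i-u)-S(X_i)}{S(X_{i+h}-u)-S(X_{i+h})}+\inp{S(X_i)}{S(X_{i+h}-u)-S(X_{i+h})}$. The middle term is, in absolute value, at most $\norm{S(X_i-u)-S(X_i)}\,\norm{S(X_{i+h}-u)-S(X_{i+h})}\le 4\norm{u}^2\norm{X_i}^{-1}\norm{X_{i+h}}^{-1}$, so $\sqrt n$ times its average is bounded by $4\sqrt n\norm{\hmu}^2\cdot\tfrac1n\sum_i\norm{X_i}^{-1}\norm{X_{i+h}}^{-1}=o_P(1)\,O_P(1)=o_P(1)$. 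The first and third terms are symmetric, so it suffices to handle the third. Writing $S(X_{i+h}-u)-S(X_{i+h})=-\mathrm{D}S(X_{i+h})[u]+\Delta_{X_{i+h}}(u)$ and using self-adjointness of $\mathrm{D}S$, the quantity $\tfrac1n\sum_i$ of the third term equals $-\inp{Z_n}{\hmu}+\tfrac1n\sum_i\inp{S(X_i)}{\Delta_{X_{i+h}}(\hmu)}$, where $Z_n:=\tfrac1n\sum_{i=1}^{n-h}\mathrm{D}S(X_{i+h})[S(X_i)]$. The remainder part is at most $\tfrac1n\sum_i\norm{\Delta_{X_{i+h}}(\hmu)}\le\kappa_2\norm{\hmu}^2\cdot\tfrac1n\sum_i\norm{X_{i+h}}^{-2}$, so $\sqrt n$ times it is again $o_P(1)\,O_P(1)=o_P(1)$.

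It remains to show $\sqrt n\inp{Z_n}{\hmu}=o_P(1)$, and this is the crux: a direct Lipschitz/tightness bound on $D_i(\hmu)$ yields only $O_P(1)$, so one must exploit that $Z_n$ is itself small. Put $V_i:=\mathrm{D}S(X_{i+h})[S(X_i)]$, so $Z_n=\tfrac1n\sum_{i=1}^{n-h}V_i$. Because the series is i.i.d., $X_i$ and $X_{i+h}$ are independent, hence $\E V_i=\E[\mathrm{D}S(X_{i+h})]\,\E S(X_i)=0$; also $\norm{V_i}\le\norm{\mathrm{D}S(X_{i+h})}\norm{S(X_i)}\le\norm{X_{i+h}}^{-1}$, so $\E\norm{V_i}^2\le\E\norm{X_{i+h}}^{-2}<\infty$; and $V_i,V_j$ are independent when $|i-j|>h$ (the index pairs $\{i,i+h\}$ and $\{j,j+h\}$ are then disjoint), so $\E\inp{V_i}{V_j}=0$ in that case. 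Hence $\E\norm{Z_n}^2=n^{-2}\sum_{|i-j|\le h}\E\inp{V_i}{V_j}\le(2h+1)\,\E\norm{X_0}^{-2}/n$, so $\norm{Z_n}=O_P(n^{-1/2})$ by Markov's inequality, and $|\inp{Z_n}{\hmu}|\le\norm{Z_n}\norm{\hmu}=O_P(n^{-1})$ by Cauchy--Schwarz. Collecting the three pieces gives $\sqrt n(\hat\rho_h-\tilde\rho_h)=o_P(1)$. I expect the main obstacle to be exactly this last step: recognizing that the naive bound is insufficient and that one must linearize in $\hmu$ and use the orthogonality $\E S(X_0)=0$ — available here only because the white-noise assumption makes $X_i$ and $X_{i+h}$ independent — together with $h$-dependence to produce the extra factor $n^{-1/2}$; the secondary point requiring care is the uniform-in-$u$ remainder bound $\norm{\Delta_x(u)}\le\kappa_2\norm{u}^2\norm{x}^{-2}$, with Assumption \ref{as-moment-2} (equivalently Lemma \ref{lem-sqnorm} at $u=0$) supplying the finiteness of $\E\norm{X_0}^{-2}$ needed to control $\tfrac1n\sum_i\norm{X_i}^{-2}$.
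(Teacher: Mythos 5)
Your proof is correct, but it takes a genuinely different route from the paper's. The paper never Taylor-expands the spatial sign map: it clears denominators to decompose $\sqrt{n}(\hat{\rho}_h-\tilde{\rho}_h)$ into four sums $R_{1,n},\dots,R_{4,n}$, uses uniform tightness to restrict $\sqrt{n}\hat{\mu}$ to a compact set $K$, and then establishes $\sup_{u\in K}\|G_{j,n}(u)\|=o_P(1)$ for the associated stochastic processes by combining pointwise convergence of the means (via the Vitali convergence theorem and the spatial-median identity $\E S(X)=0$), a variance bound exploiting $h$-dependence, and stochastic Lipschitz continuity, all glued together by Theorem 2.1 of Newey (1991). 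You replace that empirical-process machinery with deterministic, everywhere-valid bounds on $S(x-u)-S(x)$ and on the first-order Taylor remainder $\Delta_x(u)$, so that $u=\hat{\mu}$ can be substituted directly without any uniformity argument; the only probabilistic input beyond $\sqrt{n}\|\hat{\mu}\|=O_P(1)$ is the explicit second-moment computation $\E\|Z_n\|^2=O(1/n)$ for the linearized term, which exploits exactly the same cancellation $\E S(X_0)=0$ together with $h$-dependence that drives the paper's treatment of $R_{2,n}$ and $R_{3,n}$. Your version is shorter and more quantitative (it delivers the rate $O_P(n^{-1/2})$ and needs only boundedness in probability of $\sqrt{n}\hat{\mu}$ rather than full uniform tightness), at the price of requiring the differentiability of $S$ away from the origin and the second-derivative bound $\|\mathrm{D}^2S(x)\|\le\kappa_1\|x\|^{-2}$, which the paper's purely algebraic decomposition avoids. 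Both arguments use Assumption 2 in the same role — your $\E\|X_0\|^{-2}<\infty$ is its $u=0$ instance, paralleling Lemma 1 — and both rely on the strong white noise structure to make the linear term degenerate, so the hypotheses are used faithfully.
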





\begin{proof}
Letting $\hat{m}_{i,h} = \|X_i\| \|X_{i+h}\| \|X_i - \hat{\mu}\| \| X_{i+h} - \hat{\mu}\|$, we have that

\begin{align*}
  &\sqrt{n} (\hat{\rho}_h  - \tilde{\rho}_h)  \\
  &= \frac{1}{\sqrt{n}} \sum_{i=1}^{n} \left[\inp{S(X_i-\hat{\mu})}{S(X_{i+h}-\hat{\mu})}-\inp{S(X_i)}{S(X_{i+h})}  \right] \\
  &= \frac{1}{\sqrt{n}} \sum_{i=1}^{n} \frac{1}{\hat{m}_{i,h}} \left[ \langle (X_i - \hat{\mu})\|X_i\|,(X_{i+h}- \hat{\mu})\|X_{i+h}\| \rangle   \right. \\
  &\left. \;\;\;
  -\langle X_i\|X_i- \hat{\mu}\|, X_{i+h}\|X_{i+h}- \hat{\mu}\| \rangle   \right] \\
  &= \frac{1}{\sqrt{n}} \sum_{i=1}^{n} \frac{1}{\hat{m}_{i,h}} \left[ \langle  X_i  \|X_i\|, X_{i+h}  \|X_{i+h}\| \rangle - \langle X_i\|X_i - \hat{\mu}\|, X_{i+h}\|X_{i+h}- \hat{\mu}\| \rangle   \right] \\
  &\;\;\; - \frac{1}{\sqrt{n}} \sum_{i=1}^{n} \frac{\langle  \hat{\mu} \|X_i\|, X_{i+h} \|X_{i+h}\| \rangle}{\hat{m}_{i,h}}-  \frac{1}{\sqrt{n}} \sum_{i=1}^{n} \frac{\langle  X_i\|X_i\|, \hat{\mu} \|X_{i+h}\| \rangle}{\hat{m}_{i,h}}  \\
  &\;\;\; +  \frac{1}{\sqrt{n}} \sum_{i=1}^{n} \frac{\langle  \hat{\mu}\|X_i\|, \hat{\mu} \|X_{i+h}\| \rangle}{\hat{m}_{i,h}} \\
  &=: R_{1,n} - R_{2,n}-R_{3,n} + R_{4,n}.
\end{align*}
We now aim to show that $R_{j,n} = o_P(1)$, $j=1,...,4$. Starting with $R_{2,n}$, we have by the linearity of the inner product and the Cauchy-Schwarz inequality that

\begin{align}\label{s4}
| R_{2,n}| &= \left| \frac{1}{\sqrt{n}} \sum_{i=1}^{n} \frac{\langle  \hat{\mu} \|X_i\|, X_{i+h} \|X_{i+h}\| \rangle}{\hat{m}_{i,h}} \right|   =     \left| \frac{1}{\sqrt{n}} \sum_{i=1}^{n} \frac{\langle  \hat{\mu}  , X_{i+h} \rangle}{\|X_i - \hat{\mu}\|\|X_{i+h} - \hat{\mu}\|} \right|             \\
&= \left|  \left\langle \hat{\mu}  , \frac{1}{\sqrt{n}} \sum_{i=1}^{n} \frac{X_{i+h}}{\|X_i - \hat{\mu}\|\|X_{i+h} - \hat{\mu}\|} \right\rangle \right|  \notag \\
&\le \| \sqrt{n}\hat{\mu}\|  \left\| \frac{1}{n} \sum_{i=1}^{n} \frac{X_{i+h}}{\|X_i - \hat{\mu}\|\|X_{i+h} - \hat{\mu}\|} \right\| =: \| \sqrt{n}\hat{\mu}\| \|R^*_{2,n}\|.
 \end{align}
By Assumption \ref{as-tight}, $\| \sqrt{n}\hat{\mu}\|= O_P(1)$. Let $\epsilon,\delta >0$. Also by Assumption \ref{as-tight}, there exists a compact set $K\subset \mathcal{H}$ so that

\begin{align}\label{s5}
P(\|R^*_{2,n}\| > \epsilon) \le \frac{\delta}{2} + P(\{\|R^*_{2,n}\| > \epsilon \} \cap \{ \sqrt{n}\hat{\mu} \in K\}).
\end{align}

Define the process $G_{2,n}: \mathcal{H} \to \mathcal{H}$ by

$$
G_{2,n}(u)= \frac{1}{n} \sum_{i=1}^{n} \frac{X_{i+h}}{\|X_i - n^{-1/2}u\|\|X_{i+h} - n^{-1/2}u\|},
$$
so that $R^*_{2,n} = G_{2,n}(\sqrt{n}\hat{\mu}).$ Then
\begin{align}\label{s-7}
P(\{\|R^*_{2,n}\| > \epsilon \} \cap \{ \sqrt{n}\hat{\mu} \in K\}) \le P\left( \sup_{u \in K} \|G_{2,n}(u)\| > \epsilon \right).
\end{align}
Using that $\{X_i\}_{i\in \mathbb{Z}}$ are iid, we get that
$$
\E G_{2,n}(u) = \E \|X_0 - n^{-1/2}u\|^{-1} \E \frac{X_{ h}}{\|X_{h}- n^{-1/2}u\|}.
$$

Notice that for each $u \in K$,

$$Y_{h,n}:= \left\| \frac{X_{ h}}{\|X_{h}- n^{-1/2}u\|}- \frac{X_{ h}}{\|X_{h}\|} \right\| \stackrel{a.s.}{\to} 0 \mbox{ as } n\to \infty.$$
Moreover, it follows from Lemma \ref{lem-sqnorm} that $\{Y_{h,n}\}_{n\ge 1}$ are uniformly integrable. Hence by the Vitali convergence theorem (see e.g. Theorem 16.14 of Billingsly (1995)), $\E Y_{h,n} \to 0$ as $n \to \infty$. According to the contraction property of expectation in Hilbert space (see pg. 29 of Bosq (2000)), it follows that
\begin{align}\label{mean-conv-1}
\left\| \E \frac{ X_{ h}}{\|X_{h}- n^{-1/2}u\|}  \right\| \to 0 \mbox{ as }n\to \infty.
\end{align}

 Since $\E \|X_0 - n^{-1/2}u\|^{-1} < \infty$ by Assumption \ref{as-moment-2}, we then obtain that $\E G_{2,n}(u) \to 0$ as $n\to \infty$ for each $u \in K$. Also for $u,u' \in K$, we obtain using simple arithmetic with the reverse triangle inequality  and again Assumption \ref{as-moment-2} that
\begin{align*}
&\|\E G_{2,n}(u) -\E G_{2,n}(u')\|  \\
&=  \E \frac{\|X_0 - n^{-1/2}u\| -\|X_0 - n^{-1/2}u'\| }{ \|X_0 - n^{-1/2}u\| \|X_0 - n^{-1/2}u'\|} \left\| \E \frac{X_h}{\|X_h-n^{-1/2}u\|} \right\| \\
&\;\;\;\; + \left\| \E \frac{X_h[\|X_h - n^{-1/2}u\| -\|X_h - n^{-1/2}u'\|]}{\|X_h- n^{-1/2}u\| \|X_h - n^{-1/2}u'\|} \right\| \E \|X_0 - n^{-1/2}u'\|^{-1} \\
&=C_6\|u-u'\|.
\end{align*}
Hence $\E G_{2,n}(u)$ is uniformly continuous, from which it follows in conjunction with  $\E G_{2,n}(u)=o(1)$ that
\begin{align}\label{s-2}
  \sup_{u \in K} \|\E G_{2,n}(u)\| =o(1), \mbox{ as } n\to \infty.
\end{align}
Also,   using that $\{X_i\}_{i\in \mathbb{Z}}$ are iid, we obtain that

\begin{align*}
  &\E \| G_{2,n}(u) - \E G_{2,n}(u)\|^2 \\
  &= \E \left\| \frac{1}{n} \sum_{i=1}^{n} \left[\frac{X_{i+h}}{\|X_i - n^{-1/2}u\|\|X_{i+h} - n^{-1/2}u\|} - \E G_{2,n}(u) \right] \right\|^2   \\
  &=\frac{1}{n^2} \Biggl[  \sum_{i=1}^{n}  \E \left\|  \frac{X_{i+h}}{\|X_i - n^{-1/2}u\|\|X_{i+h} - n^{-1/2}u\|} - \E G_{2,n}(u)  \right\|^2 \\
  &+ \sum_{1\le i \ne j \le n}  \E \left\langle   \frac{X_{i+h}}{\|X_i - n^{-1/2}u\|\|X_{i+h} - n^{-1/2}u\|} - \E G_{2,n}(u) ,\right.\\
  &\left. \hspace{+2cm} \frac{X_{j+h}}{\|X_j - n^{-1/2}u\|\|X_{j+h} - n^{-1/2}u\|} - \E G_{2,n}(u) \right \rangle \Biggl].
\end{align*}
When $i\ne j$, and $|j-i|\ne h$, it follows due to the assumption that the $\{X_i\}_{i\in \mathbb{Z}}$ are iid that
\begin{align*}
&\E \left\langle   \frac{X_{i+h}}{\|X_i - n^{-1/2}u\|\|X_{i+h} - n^{-1/2}u\|} - \E G_{2,n}(u) , \right.\\
&\left. \hspace{0.5cm}  \frac{X_{j+h}}{\|X_j - n^{-1/2}u\|\|X_{j+h} - n^{-1/2}u\|} - \E G_{2,n}(u) \right \rangle=0.
\end{align*}
And when $|j-i|=h$, by the Cauchy-Schwarz inequality the same inner product term is bounded by
\begin{align}\label{normterm}
\E \left\|  \frac{X_{ h}}{\|X_0 - n^{-1/2}u\|\|X_{h} - n^{-1/2}u\|} - \E G_{2,n}(u)  \right\|^2.
\end{align}
Hence

\begin{align}\label{variance}
&\E \| G_{2,n}(u) - \E G_{2,n}(u)\|^2 \nonumber \\
&\le \frac{C_7}{n} \E \left\|  \frac{X_{ h}}{\|X_0 - n^{-1/2}u\|\|X_{h} - n^{-1/2}u\|} - \E G_{2,n}(u)  \right\|^2.
\end{align}

By Assumption \ref{as-moment-2} and Lemma \ref{lem-sqnorm}, for all $n$ sufficiently large,
\begin{align*}
&\E \left\|  \frac{X_{ h}}{\|X_0 - n^{-1/2}u\|\|X_{h} - n^{-1/2}u\|}  \right\|^2 \\
&= \E \|X_0 - n^{-1/2}u\|^{-2} \E \left\|  \frac{X_{ h}}{\|X_{h} - n^{-1/2}u\|}\right\|^2  \le C_8,
\end{align*}
which implies the term in \eqref{normterm} is bounded for all $n$ sufficiently large. This along with \eqref{variance} gives that

\begin{align}\label{s-1}
  G_{2,n}(u) \stackrel{P}{\to} \E G_{2,n}(u), \mbox{ for each } u \in K.
\end{align}
We now aim to show $G_{2,n}$ is stochastically Lipshitz. To that end, for $u,u' \in K$,  simple arithmetic gives that

\begin{align*}
&\|G_{2,n}(u)- G_{2,n}(u')\| \\
&=   \Biggl\| \frac{1}{n} \sum_{i=1}^{n} \left[\frac{1}{\|X_i - n^{-1/2}u\|} - \frac{1}{\|X_i - n^{-1/2}u'\|} \right]\frac{X_{i+h}}{\|X_{i+h} - n^{-1/2}u\|}     \\
&\;\;\;\; +\frac{1}{n} \sum_{i=1}^{n} \left[\frac{X_{i+h}}{\|X_i - n^{-1/2}u\|} -\frac{X_{i+h}}{\|X_i - n^{-1/2}u'\|}  \right] \\
& \hspace{+2cm} \times \frac{1}{\|X_i - n^{-1/2}u'\|}\frac{X_{i+h}}{\|X_i - n^{-1/2}u\|} \Biggl\| \\
&\le    \Biggl\| \frac{1}{n} \sum_{i=1}^{n} \left[\frac{1}{\|X_i - n^{-1/2}u\|} - \frac{1}{\|X_i - n^{-1/2}u'\|} \right]\frac{X_{i+h}}{\|X_i - n^{-1/2}u\|} \Biggl\|    \\
&\;\;\;\; + \Biggl\|\frac{1}{n} \sum_{i=1}^{n} \left[\frac{X_{i+h}}{\|X_i - n^{-1/2}u\|} -\frac{X_{i+h}}{\|X_i - n^{-1/2}u'\|}  \right] \\
& \hspace{+2cm} \times \frac{1}{\|X_i - n^{-1/2}u'\|}\frac{X_{i+h}}{\|X_i - n^{-1/2}u\|} \Biggl\| \\
&=: \| L_{1,n}(u,u')\| + \|L_{2,n}(u,u')\|.
\end{align*}
Notice that by the triangle inequality,

\begin{align*}
 &\| L_{1,n}(u,u')\| \\
 &\le \frac{1}{n} \sum_{i=1}^{n}  \frac{n^{-1/2}\|u-u'\|}{\|X_i - n^{-1/2}u\|\|X_i - n^{-1/2}u'\|} \left\| \frac{X_{i+h}}{\|X_{i+h} - n^{-1/2}u\|} \right\| \\
 &=\|u-u'\| \frac{1}{n^{3/2}} \sum_{i=1}^{n}  \frac{1}{\|X_i - n^{-1/2}u\|\|X_i - n^{-1/2}u'\|} \left\| \frac{X_{i+h}}{\|X_{i+h} - n^{-1/2}u\|} \right\| \\
 &=:  \|u-u'\|B_n,
\end{align*}
where by Assumption \ref{as-moment-2} and Lemma \ref{lem-sqnorm}, $\E B_n \le C_9$. This implies that the process $\{G_{2,n}(u),\; u \in K\}$ satisfies the equicontinuity conditions of Assumption 3A of Newey (1991). This along with \eqref{s-1} imply according to Theorem 2.1 of Newey (1991) that
$$
\sup_{u \in K} \|G_{2,n}-\E G_{2,n}\| = o_P(1).
$$
In conjunction with \eqref{s-2}, this shows that $\sup_{u \in K} \|G_{2,n}(u)\|=o_P(1)$. Due to \eqref{s4}, \eqref{s5}, and \eqref{s-7}, it follows that  $R_{2,n}=o_P(1)$. Following a nearly identical argument, it may be shown that $R_{3,n}=o_P(1)$ and $R_{4,n}=o_P(1)$. Although the argument is also similar to show that $R_{1,n}=o_P(1)$, it differs in a few places and so we think it is valuable to provide the details. Notice that we may write

\begin{align*}
R_{1,n} &= \frac{1}{\sqrt{n}} \sum_{i=1}^{n} \frac{1}{\hat{m}_{i,h}} \left[ \langle  X_i  \|X_i\|, X_{i+h}  \|X_{i+h}\| \rangle \right.\\
&\;\;\;\;\;\;\left. - \langle X_i\|X_i - \hat{\mu}\|, X_{i+h}\|X_{i+h}- \hat{\mu}\| \rangle   \right] \\
 &= \frac{1}{\sqrt{n}} \sum_{i=1}^{n} \frac{1}{\hat{m}_{i,h}} \langle  X_i  \|X_i\|, X_{i+h} [ \|X_{i+h}\|-\|X_{i+h}- \hat{\mu}\|]   \rangle \\
 &\;\;\;\;\;\;+ \frac{1}{\sqrt{n}} \sum_{i=1}^{n} \frac{1}{\hat{m}_{i,h}}  \langle  X_i  [\|X_i\|-\|X_{i }- \hat{\mu}\|], X_{i+h} \|X_{i+h}- \hat{\mu}\|    \rangle \\
  &= \frac{1}{\sqrt{n}} \sum_{i=1}^{n} \frac{ \langle  X_i  \|X_i\|, X_{i+h} [ \|X_{i+h}\|-\|X_{i+h}- \hat{\mu}\|]   \rangle}{\|X_{i+h}\| \|X_{i}- \hat{\mu}\|\| X_{i+h}- \hat{\mu}\|} \\
 &\;\;\;\;\;\;+ \frac{1}{\sqrt{n}} \sum_{i=1}^{n} \frac{ \langle  X_i  [\|X_i\|-\|X_{i }- \hat{\mu}\|], X_{i+h} \|X_{i+h}- \hat{\mu}\|    \rangle}{{\|X_{i+h}\| \|X_{i}- \hat{\mu}\|\| X_{i}\|}} \\
&=: R_{1,n}^{(1)} + R_{1,n}^{(2)}.
\end{align*}

Define

$$
G_{1,n}^{(1)}(u) =  \frac{1}{\sqrt{n}} \sum_{i=1}^{n} \frac{ \langle  X_i  \|X_i\|, X_{i+h} [ \|X_{i+h}\|-\|X_{i+h}- n^{-1/2}u\|]   \rangle}{\|X_{i+h}\| \|X_{i}- n^{-1/2}u\|\| X_{i+h}- n^{-1/2}u\|}
$$
so that $R_{1,n}^{(1)}= G_{1,n}^{(1)}(\sqrt{n}\hat{\mu})$. Arguing as in \eqref{s-7}, it is enough to show that for any compact $K \subset \mathcal{H}$,

$$
\sup_{u \in K} |G_{1,n}^{(1)}(u)| =o_P(1).
$$

Due the independence of the $X_i's$ and Fubini's Theorem,

\begin{align*}
  \E G_{1,n}^{(1)}(u)   = \sqrt{n} \left\langle \E \frac{X_0}{\|X_0 - n^{-1/2}u\| }, \E \frac{X_h[ \| X_h \| - \|X_h - n^{-1/2}u]|}{\|X_h - n^{-1/2}u\|\|X_h\| } \right\rangle.
\end{align*}

By the Cauchy-Schwarz and reverse triangle inequality, \eqref{mean-conv-1}, and Assumption \ref{as-moment-2}, it follows that

\begin{align}\label{s-8}
  \E G_{1,n}^{(1)}(u)   &\le  \sqrt{n} \left\| \E \frac{X_0}{\|X_0 - n^{-1/2}u\| }\right\| \left\| \E \frac{X_h[ \| X_h \| - \|X_h - n^{-1/2}u]|}{\|X_h - n^{-1/2}u\|\|X_h\| } \right\| \\
  &\le C_{10}  \left\| \E \frac{X_0}{\|X_0 - n^{-1/2}u\| }\right\| \E \|X_h - n^{-1/2}u\|^{-1} \to 0, \notag
\end{align}
as $n\to \infty$ for each $u\in K$. Moreover, noting that $(X_i,X_{i+h})$ is independent of $(X_j,X_{j+h})$ when $i\ne j$, $|i-j|\ne h$, we obtain along with the Cauchy-Schwarz inequality that

\begin{align*}
  \mbox{Var}(G_{1,n}^{(1)}(u)) &= \frac{1}{n}\Biggl[ \sum_{i=1}^{n} \mbox{Var}\left( \frac{ \langle  X_i  \|X_i\|, X_{i+h} [ \|X_{i+h}\|-\|X_{i+h}- \hat{\mu}\|]   \rangle}{\|X_{i+h}\| \|X_{i}- \hat{\mu}\|\| X_{i+h}- \hat{\mu}\|} \right) \\
  &+ 2 \sum_{1 \le i<j \le n} \mbox{Cov}\left( \frac{ \langle  X_i  \|X_i\|, X_{i+h} [ \|X_{i+h}\|-\|X_{i+h}- \hat{\mu}\|]   \rangle}{\|X_{i+h}\| \|X_{i}- \hat{\mu}\|\| X_{i+h}- \hat{\mu}\|}, \right.\\
  &\left. \hspace{+2cm} \frac{ \langle  X_j  \|X_j\|, X_{j+h} [ \|X_{j+h}\|-\|X_{j+h}- \hat{\mu}\|]   \rangle}{\|X_{j+h}\| \|X_{j}- \hat{\mu}\|\| X_{j+h}- \hat{\mu}\|} \right) \Biggl] \\
  &\le C_{11} \mbox{Var}\left( \frac{ \langle  X_i  \|X_i\|, X_{i+h} [ \|X_{i+h}\|-\|X_{i+h}- \hat{\mu}\|]   \rangle}{\|X_{i+h}\| \|X_{i}- \hat{\mu}\|\| X_{i+h}- \hat{\mu}\|} \right).
\end{align*}
Again applying Assumption \ref{as-moment-2} and the reverse triangle inequality, we obtain that

$$
\mbox{Var}\left( \frac{ \langle  X_i  \|X_i\|, X_{i+h} [ \|X_{i+h}\|-\|X_{i+h}- \hat{\mu}\|]   \rangle}{\|X_{i+h}\| \|X_{i}- \hat{\mu}\|\| X_{i+h}- \hat{\mu}\|} \right) \le C_{12} n^{-1/2} \to 0 \mbox{ as $n\to \infty$}.
$$
This implies by Chebyshev's inequality that $G_{1,n}^{(1)}(u)- \E G_{1,n}^{(1)}(u)=o_P(1).$ In order to establish the stochastic equicontinuity of $G_{1,n}^{(1)}$, we note that for $u,u' \in K$,

\begin{align*}
&| G_{1,n}^{(1)}(u) - G_{1,n}^{(1)}(u')| = \Biggl| \frac{1}{\sqrt{n}} \sum_{i=1}^{n} \frac{ \langle  X_i  \|X_i\|, X_{i+h} [ \|X_{i+h}\|-\|X_{i+h}- n^{-1/2}u\|]   \rangle}{\|X_{i+h}\| \|X_{i}- n^{-1/2}u\|\| X_{i+h}- n^{-1/2}u\|}  \\
&\;\;\;\;\;\;\;\;\;\;\;\;\;\;\;\;\;\;\;\;\;\;\;\;\;\;\;-  \frac{ \langle  X_i  \|X_i\|, X_{i+h} [ \|X_{i+h}\|-\|X_{i+h}- n^{-1/2}u'\|]   \rangle}{\|X_{i+h}\| \|X_{i}- n^{-1/2}u'\|\| X_{i+h}- n^{-1/2}u'\|} \Biggl| \\
&\le \frac{1}{\sqrt{n}} \sum_{i=1}^{n} \Biggl[ \left| \frac{ 1 }{ \|X_{i}- n^{-1/2}u\|\| X_{i+h}- n^{-1/2}u\|} \right.\\
&\hspace{3cm} \left. - \frac{ 1 }{ \|X_{i}- n^{-1/2}u'\|\| X_{i+h}- n^{-1/2}u'\|} \right| \\
& \hspace{2cm} \times \Biggl|\Biggl\langle  X_i  \|X_i\|, X_{i+h} \frac{[\|X_{i+h}- n^{-1/2}u\|-\|X_{i+h}- n^{-1/2}u'\|]}{\|X_{i+h}\|}    \Biggl\rangle \Biggl| \Biggl] \\
&\;\;\;\;+ \frac{1}{\sqrt{n}} \sum_{i=1}^{n} \Biggl[ \left| \frac{ 1 }{ \|X_{i}- n^{-1/2}u\|\| X_{i+h}- n^{-1/2}u\|} \right.\\
&\hspace{3cm} \left. - \frac{ 1 }{ \|X_{i}- n^{-1/2}u'\|\| X_{i+h}- n^{-1/2}u'\|} \right| \\
& \hspace{2cm} \times \Biggl|\Biggl\langle  X_i  \|X_i\|, X_{i+h} \frac{[ \|X_{i+h}\|-\|X_{i+h}- n^{-1/2}u\|]}{\|X_{i+h}\|}    \Biggl\rangle \Biggl| \Biggl] \\
&=: L_{1,n}(u,u') + L_{2,n}(u,u').
\end{align*}
By Cauchy-Schwarz and the reverse triangle inequalities,

$$
L_{1,n}(u,u') \le \|u-u'\| \frac{1}{n} \sum_{i=1}^{n} \frac{\|X_i\|}{\|X_i - n^{-1/2}u\|\|X_{i+h} - n^{-1/2}u'\|}=:\|u-u'\| B_{n,1}.
$$
By Assumption \ref{as-moment-2} and Lemma \ref{lem-sqnorm}, $EB_{1,n}<\infty$. It follows similarly that

$$
L_{2,n}(u,u') \le \|u-u'\| B_{n,2}, \mbox{ with } \E B_{2,n}<\infty.
$$
Hence $| G_{1,n}^{(1)}(u) - G_{1,n}^{(1)}(u')| \le \|u-u'\| B_n^*,$ where $EB_n^*< \infty$. It follows once again by Theorem 2.1 of Newey (1991) that
$$
\sup_{u\in K} |G_{1,n}^{(1)}(u) - \E G_{1,n}^{(1)}(u)|=o_P(1).
$$
Owing then to \eqref{s-8}, $\sup_{u \in K} |G_{1,n}^{(1)}(u)|=o_P(1).$ It follows similarly that $\sup_{u \in K} |G_{1,n}^{(2)}(u)|=o_P(1),$ and hence $R_{1,n}=o_P(1),$ giving the result.
\end{proof}

\subsection{Proof of Theorem \ref{thm-cons-2}}\label{app-thm2}

\begin{proof}
We first note that
\begin{align*}
	\abs{\frac{1}{n}\sum_{i=1}^{n-h} \inp{S(X_i-\hmu)}{S(X_{i+h}-\hmu)} - \frac{1}{n}\sum_{i=1}^{n} \inp{S(X_i-\hmu)}{S(X_{i+h}-\hmu)}}  &\le \frac{h}{n} \\
	&\ias 0 .
\end{align*}

	Thus the theorem follows upon showing that
	\[
	\abs{\frac{1}{n}\sum_{i=1}^n \inp{S(X_i-\hmu)}{S(X_{i+h}-\hmu)} - \frac{1}{n}\sum_{i=1}^n \inp{S(X_i)}{S(X_{i+h})}} \ias 0.
	\]
	
	Let $A_n = \{(x,y) \in \HH \times \HH: 2\norm{x-\hmu}> \norm{x} \text{ and }  2\norm{y-\hmu}> \norm{y}\}$. Then 
	
	\begin{align*}
		&\abs{\frac{1}{n}\sum_{i=1}^n \inp{S(X_i-\hmu)}{S(X_{i+h}-\hmu)} - \frac{1}{n}\sum_{i=1}^n \inp{S(X_i)}{S(X_{i+h})}} \\
		&=
		\frac{1}{n}\sum_{i=1}^n \abs{\inp{S(X_i-\hmu)}{S(X_{i+h}-\hmu)}  - \inp{S(X_i)}{S(X_{i+h})}} \\
		&=
		\frac{1}{n}\san \abs{\inp{S(X_i-\hmu)}{S(X_{i+h}-\hmu)}  - \inp{S(X_i)}{S(X_{i+h})}} \\
		&+
		\frac{1}{n}\sanc \abs{\inp{S(X_i-\hmu)}{S(X_{i+h}-\hmu)}  - \inp{S(X_i)}{S(X_{i+h})}} \\
		&=:
		S_{n,1} + S_{n,2}.
	\end{align*}
	We aim to show $S_{n,1}\ias0$ and $S_{n,2}\ias0$. Looking at the first term $S_{n,1}$, we have
	\begin{align*}
		&S_{n,1}\\
		&= 
		\frac{1}{n}\san \abs{\inp{\frac{\xihmu}{\nxihmu}}{\frac{\xihhmu}{\nxihhmu}} - \inp{\frac{X_i}{\nxi}}{\frac{\xih}{\nxih}}} \\
		&=
		\frac{1}{n}\san \Biggl| \inp{\frac{(\xihmu)\nxi}{\nxihmu\nxi}}{\frac{(\xihhmu)\nxih}{\nxihhmu\nxih}} \\
		 & \;\;\;\;\;\;\;\;\;\; - \inp{\frac{X_i\nxihmu}{\nxi\nxihmu}}{\frac{\xih\nxihhmu}{\nxih\nxihhmu}} \Biggl| \\
		&= \frac{1}{n}\san\frac{1}{\hat{m}_{i,h}}\times \\
		&\abs{\inp{(\xihmu)\nxi}{(\xihhmu)\nxih} - \inp{X_i(\nxihmu)}{\xih(\nxihhmu)}},
	\end{align*}
	where again $\hat{m}_{i,h}= \nxi \nxih \nxihmu \nxihhmu$. By expanding the summand and applying the triangle inequality, we get that 
	\begin{align*}
		&
		\frac{1}{\hat{m}_{i,h}}\abs{\inp{(\xihmu)\nxi}{(\xihhmu)\nxih} \right.\\
		&\;\;\;\left. - \inp{X_i(\nxihmu)}{\xih(\nxihhmu)}}  \\
		&= \frac{1}{\hat{m}_{i,h}}\abs{\left\{\inp{X_i\nxi}{\xih\nxih} - \inp{X_i(\nxihmu)}{\xih(\nxihhmu)}\right\} \right.\\
		&\;\;\;\left. - \inp{\hmu\nxi}{\xih\nxih} 
			- \inp{X_i\nxi}{\hmu\nxih} + \inp{\hmu\nxi}{\hmu\nxih}} \\
		&=
		\frac{1}{\hat{m}_{i,h}}\abs{S_{n,i}^{(1)}-S_{n,i}^{(2)}-S_{n,i}^{(3)}+S_{n,i}^{(4)}} \le  \frac{1}{\hat{m}_{i,h}}\left(\abs{S_{n,i}^{(1)}} + \abs{S_{n,i}^{(2)}} + \abs{S_{n,i}^{(3)}} +\abs{S_{n,i}^{(4)}}\right),
	\end{align*}
	where 
	\begin{align*}
		S_{n,i}^{(1)} &= \inp{X_i\nxi}{\xih\nxih} - \inp{X_i(\nxihmu)}{\xih(\nxihhmu)},\\
		S_{n,i}^{(2)} &= \inp{\hmu\nxi}{\xih\nxih},  \\	
		S_{n,i}^{(3)}  &= \inp{X_i\nxi}{\hmu\nxih}, \;\; \mbox{and } S_{n,i}^{(4)}  =  \inp{\hmu\nxi}{\hmu\nxih}.
	\end{align*}
	
	By adding and subtracting $\inp{X_i\nxi}{\nxih\nxihhmu}$ and using the triangle inequality, we have 
	\begin{align*}
		&\frac{1}{\hat{m}_{i,h}}\abs{S_{n,i}^{(1)}} \\
		&= \frac{1}{\hat{m}_{i,h}}\abs{ \left\{\inp{X_i\nxi}{\xih\nxih} - \inp{X_i\nxi}{\xih\nxihhmu}\right\} \right.\\
	    &\left.
			+ \left\{\inp{X_i\nxi}{\xih\nxihhmu} - \inp{X_i(\nxihmu)}{\xih(\nxihhmu)}\right\}} \\
		&=:
		\frac{1}{\hat{m}_{i,h}}\abs{S_{n,i}^{(1,1)} + S_{n,i}^{(1,2)}} \le \frac{1}{\hat{m}_{i,h}} \abs{S_{n,i}^{(1,1)}} + \frac{1}{\hat{m}_{i,h}} \abs{S_{i,i}^{(1,2)}},
	\end{align*}
	where 
	\begin{align*}
		S_{n,i}^{(1,1)} &= \inp{X_i\nxi}{\xih\nxih} - \inp{X_i\nxi}{\xih\nxihhmu} \\
		S_{n,i}^{(1,2)} &= \inp{X_i\nxi}{\xih\nxihhmu} - \inp{X_i(\nxihmu)}{\xih(\nxihhmu)}.
	\end{align*}
	
	Using the reverse triangle inequality and Cauchy–Schwarz inequality, we obtain that 
	\begin{align*}
		\frac{1}{\hat{m}_{i,h}}\abs{S_{n,1}^{(1,1)}}
		&= 
		\frac{1}{\hat{m}_{i,h}}\abs{\inp{X_i\nxi}{\xih(\nxih-\nxihhmu)}} \\
		&= 
		\frac{(\nxih-\nxihhmu)\nxi}{\hat{m}_{i,h}}\inp{X_i}{\xih}\\
		& \le  \frac{\nhmu\nxi}{\hat{m}_{i,h}}\inp{X_i}{\xih}  \le \frac{\nhmu\nxi^2\nxih}{\hat{m}_{i,h}} \\
		&\le \frac{\nhmu\nxi^2\nxih}{\nxi\nxih\nxihmu\nxihhmu} \\
		&=
		\frac{\nxi\nhmu}{\nxihmu\nxihhmu}. 
	\end{align*}
	For $(X_i,\xih)\in A_n$, it follows that 
	\[
	\frac{\nxi\nhmu}{\nxihmu\nxihhmu}  \le 4\frac{\nxi\nhmu}{\nxi\nxih}= \frac{4\nhmu}{\nxih}.
	\]
	Combining the above arguments, it follows that 
	\begin{align*}
		&\frac{1}{n}\san \frac{1}{\hat{m}_{i,h}}\abs{S_{n,i}^{(1,1)}} \le \frac{1}{n} \san\frac{4}{\nxih}\nhmu\\
		&= 4\nhmu \frac{1}{n} \san\frac{1}{\nxih}\le 4\nhmu \frac{1}{n}\sum_{i=1}^n \frac{1}{\nxih}.
	\end{align*}
	Since $\nhmu \ias 0$ and $\frac{1}{n} \sum_{i=1}^n\nxih^{-1}\ias E{\nxih^{-1}}<\infty$ by the ergodic Theorem, we have that 
	\begin{equation}\label{eq-s111}
		\frac{1}{n} \san\frac{1}{\hat{m}_{i,h}} \abs{S_{n,i}^{(1,1)}} \ias 0.
	\end{equation}
	A similar argument shows that 
	\begin{equation}\label{eq-s112}
		\frac{1}{n} \san \frac{1}{\hat{m}_{i,h}}\abs{S_{n,i}^{(1,2)}} \ias 0.
	\end{equation}
	Combining \eqref{eq-s111} and \eqref{eq-s112}, we get that 
	\[
	\frac{1}{n} \san \frac{1}{\hat{m}_{i,h}}\abs{S_{n,i}^{(1)}} \ias 0.
	\] 
	
	As for $S_{n,i}^{(2)}$, using the Cauchy-Schwarz inequality, we have that 
	\begin{align*}
		\frac{1}{\hat{m}_{i,h}}\abs{S_{n,i}^{(2)}}
		&= 
		\frac{1}{\hat{m}_{i,h}}\abs{\inp{\hmu\nxi}{\xih\nxih}} \\
		&\le \frac{\nhmu\nxi\nxih^2}{\nxi\nxih\nxihmu\nxihhmu}  \\
		&=
		\frac{\nhmu\nxih}{\nxihmu\nxihhmu}.
	\end{align*}
	For $(X_i,X_{i+h})\in A_n$, it follows that 
	\[
	\frac{\nhmu\nxih}{\nxihmu\nxihhmu} \le \frac{4\nhmu\nxih}{\nxi\nxih} = 4 \frac{\nhmu}{\nxi}.
	\]
	
	Thus,
	\[
	\frac{1}{n}\san \frac{1}{\hat{m}_{i,h}}\abs{S_{n,i}^{(2)}} \le  \frac{1}{n}\sum_{i=1}^n \frac{1}{\hat{m}_{i,h}}\abs{S_{n,i}^{(2)}} \le 
	4\nhmu\frac{1}{n}\sum_{i=1}^n\frac{1}{\nxi} \ias 0.
	\]
	
	Nearly identical arguments give that 
	\[
	\frac{1}{n}\san\frac{1}{\hat{m}_{i,h}}\abs{S_{n,i}^{(3)}}\ias 0.
	\]
	Finally considering $S_{n,i}^{(4)}$, we have that using the Cauchy-Schwarz inequality that 
	\begin{align*}
		&\frac{1}{n}\san \abs{S_{n,i}^{(4)}} = \frac{1}{n}\san \frac{1}{\hat{m}_{i,h}}\abs{\inp{\hmu\nxi}{\hmu\nxih}}\\
		&\le 
		\frac{1}{n}\san \frac{\nhmu^2 \nxi\nxih}{\nxi\nxih\nxihmu\nxihhmu}\\
		&\le 
		4\nhmu^2 \frac{1}{n} \san \frac{1}{\nxi\nxih} \\
		&\le 
		4\nhmu^2 \frac{1}{n}\sum_{i=1}^n \frac{1}{\nxi\nxih} \ias 0,
	\end{align*}
	since $\nhmu^2 \ias 0$, and $\frac{1}{n}\sum_{i=1}^n \nxi^{-1}\nxih^{-1} \ias E[\nxi^{-1}\nxih^{-1}] <\infty$ by Assumption \ref{ass2}. Thus, we conclude that
	\[
	S_{n,i}^{(1)}=\frac{1}{n} \san \abs{\inp{S(X_i-\hmu)}{S(X_{i+h}-\hmu)}  - \inp{S(X_i)}{S(X_{i+h})}} \ias 0.
	\]
	
	Turning to $S_{n,2}$. The argument here is similar to that presented in the proof of Theorem 1 of \cite{boente:2019:spatial}, but we provide the details so that the proof is self-contained. First note that 
	\begin{equation}\label{eq-norm}
		\norm{\inp{S(\xihmu)}{S(\xihhmu)}} = \norm{\inp{S(X_i)}{S(\xihhmu)}} = 1.
	\end{equation}
	
	Let $d_{n,x} = \{(x,y)\in \HH\times \HH:2 \norm{x-\hmu} \ge \norm{x}\}$, $d_{n,y} = \{(x,y)\in \HH \times \HH:2 \norm{y-\hmu} \ge \norm{y}\}$.

	Recall $A_n = \{(x,y) \in \HH \times \HH: 2\norm{x-\hmu}> \norm{x} \text{ and }  2\norm{y-\hmu}> \norm{y}\}$. Consequently
	$A_n^c = \{(x,y) \in \HH \times \HH : 2\norm{x-\hmu}\le \norm{x}~ \text{or}~ 2\norm{y-\hmu}\le \norm{y} \}$. Clearly $A_n^c =  d_{n,x}^c \cup d_{n,y}^c$.
	Consequently,
	\begin{equation}\label{eq-z}
		\1{A_n^c}(x,y)  \le \1{d_{n,x}^c}(x,y) + \1{d_{n,y}^c}(x,y).
	\end{equation}
	
	Using the triangle inequality that 
	\begin{align}\label{eq-sn2}
		S_{n,2} &= \frac{1}{n} \sanc\norm{\inp{S(X_i-\hmu)}{S(X_{i+h}-\hmu)} -  \inp{S(X_i)}{S(X_{i+h})}} \nonumber\\
		&\le \frac{2}{n} \sanc 1= \frac{2}{n} \sum_{i=1}^n \1{A_n^c}(X_i,X_{i+h}) \nonumber \\
		&\le \frac{2}{n} \sum_{i=1}^n\left(\1{d_{n,x}^c}(X_i,X_{i+h}) + \1{d_{n,y}^c}(X_i,X_{i+h}) \right) \nonumber \\
		&=  \frac{2}{n} \sum_{i=1}^n \1{d_{n,x}^c}(X_i,X_{i+h}) +  \frac{2}{n} \sum_{i=1}^n \1{d_{n,y}^c}(X_i,X_{i+h}) \nonumber\\
		&=: J_{n,1} + J_{n,2}.
	\end{align}
	We aim to show that $J_{n,1}\ias 0 $ and $J_{n,2}\ias0$.
	
	Note that the assumption $\EXp{\norm{X}^{-1}}<\infty$ implies that $P(\norm{X} = 0 ) = 0$. Thus, by continuity of the probability measure, for every $\epsilon>0$, there exists $\delta>0$, such that $P(\norm{X}\in B_{\delta,x})\le \epsilon$, where $B_{\delta,x} = \{(X,Y)\in \HH \times \HH : \norm{X}\le \delta\}$. Then
	\begin{align*}
		&\frac{1}{n}\sum_{i=1}^n \1{d_{n,x}^c}(X_i, X_{i+h}) \\
		& \le \frac{1}{n} \sum_{i=1}^n \1{B_{\delta,x}}(X_i,X_{i+h}) + \frac{1}{n} \sum_{i=1}^n \left(  \1{d_{n,x}^c}(X_i, X_{i+h}) - \1{B_{\delta,x}}(X_i,X_{i+h}) \right)_+ \\
		&=: J_{n,1}^{(1)} + J_{n,1}^{(2)},
	\end{align*}
	where $a_+ = \max(a,0)$. Using the ergodic theorem, $J_{n,1}^{(1)} \ias P(\norm{X}\in B_{\delta,x})\le \epsilon$. Since $\epsilon$ is arbitrary, we take $\epsilon \downarrow 0^+$ and obtained $J_{n,1}^{(1)}\ias 0$. As for the $J_{n,2}$, note that $\{(X_i,X_{i+h})\in \HH \times \HH : \nhmu \le \delta/2\}\subset \{(X_i,X_{i+h})\in \HH \times \HH : \1{d_{n,x}^c}(X_i, X_{i+h}) - \1{B_{\delta,x}}(X_i,X_{i+h})=0 \}$. Hence, using Assumption \ref{ass1}, there exists a null set $N$ such that for $\omega\notin N$, there exists an $M\in \mathbb{Z}$ such that for every integer $n\ge M$, $\nhmu\le \delta/2$. This implies
	\[
	J_{n,1}^{(2)} = \frac{2}{n} \sum_{i=1}^n \1{d_{n,y}^c}(X_i,X_{i+h}) = 0.
	\] 
	Combining the arguments above, we obtain $ 	J_{n,1} \ias 0.$ An identical argument shows also that $J_{n,2} \ias 0.$ Therefore, $S_{n,2} \ias 0,$ which completes the proof.
\end{proof}

\section{Additional Simulation Results}\label{sim-app}

\subsection{Accuracy of estimator \texorpdfstring{$\|\hat{C}_P\|_2$}{norm of C\_p}}
We examined the accuracy of variance estimator $\|\hat{C}_P\|_2$ by generating data for which the value $\|{C}_P\|_2$ may be computed explicitly. Consider the two-dimensional process 
	\[
	X_i(t) = \sqrt{2\lambda_1}Z_{1,i} \sin(2\pi t) + \sqrt{2\lambda_2} Z_{2,i} \cos(2\pi t),
	\]
	where $Z_{j,i}$ are a family of independent and identically distributed $\N(0,1)$ random variables. Straightforward calculation shows that $\|C_P\|_2$ may be calculated based on the moments ratios of quadratic forms of normal random variables. Such moments are computed explicitly in \cite{magnus:1986:quadratic}. It may be shown when $\lambda_1=\lambda_2=1$, then 
	
	 covariance kernel of the projected data is
	\begin{align*}
		C_P(t,s) = \cos(2\pi t) \cos(2\pi s) + \sin(2\pi t) \sin(2\pi s)=\cos(2\pi(s-t)),
	\end{align*}
	and hence in this case $\|C_P\|_2^2=1/2$. Similarly, when $\lambda_1 = 1,\lambda_2=2$,  $\|C_P\|_2^2 = 9 - 6\sqrt{2}.$
	
	In order to examine the effect of variance estimation the coverage of the confidence intervals in \eqref{conf-def}, we considered four different ways for constructing the confidence interval: 1) Normal confidence interval with true variance; 2) t-confidence interval with true variance with degrees of freedom $n-1$; 3) normal confidence interval with estimated variance and 4) t-confidence interval with estimated variance with degrees of freedom $n-1$. The empirical coverage rates for each of these intervals are reported in Table \ref{tab-finite}. We observed that the difference between the four construction methods yields was quite small, with the t-confidence interval being relatively narrow which is a phenomenon that is widely known. With the coverage rates that are close to the nominal levels, it evidently supports the derived asymptotic distribution given in \eqref{eq-distribution}.
	
	\begin{table}[hbp]
		\centering
		\caption{The comparison between the variance estimate in a finite-dimensional case: 1) $Z_{\alpha/2} \|C_P\|_2^2$ is normal confidence interval with true variance; 2) $t_{\alpha/2,n-1} \|C_P\|_2^2$ is t-confidence interval with true variance with degrees of freedom $n-1$; 3) $Z_{\alpha/2} \|\hat{C}_P\|_2^2$ is normal confidence interval with estimated variance; and 4) $t_{\alpha/2,n-1} \|\hat{C}_P\|_2^2$ is t-confidence interval with estimated variance with degrees of freedom $n-1$.}
		\label{tab-finite}
		\begin{adjustbox}{max width=\textwidth}
			\begin{tabular}{@{}lllllllllllllll@{}}
				\toprule
				\multicolumn{7}{c}{$\lambda_1 = 1$ and $\lambda_2 = 2$}                 &  & \multicolumn{7}{c}{$\lambda_1 = \lambda_2=1$}                   \\ \midrule
				N   & $\alpha$ & h  &  $Z_{\alpha/2} \|C_P\|_2^2$ & $t_{\alpha/2, n-1}\|C_P\|_2^2$ & $Z_{\alpha/2} \|\hat{C}_P\|_2^2$& $t_{\alpha/2, n-1} \|\hat{C}_P\|_2^2$ &  & N   & $\alpha$ & h  & $Z_{\alpha/2} \|C_P\|_2^2$ & $t_{\alpha/2, n-1} \|C_P\|_2^2$ & $Z_{\alpha/2} \|\hat{C}_P\|_2^2$ & $t_{\alpha/2, n-1} \|\hat{C}_P\|_2^2$ \\
				\midrule
				100 & 0.01  & 1  & 0.008  & 0.006  & 0.007 & 0.006 &  & 100 & 0.01  & 1  & 0.007  & 0.006  & 0.007 & 0.006 \\
				&       & 5  & 0.008  & 0.006  & 0.007 & 0.006 &  &     &       & 5  & 0.009  & 0.007  & 0.009 & 0.006 \\
				&       & 10 & 0.009  & 0.008  & 0.008 & 0.008 &  &     &       & 10 & 0.008  & 0.007  & 0.007 & 0.006 \\
				& 0.05  & 1  & 0.048  & 0.046  & 0.045 & 0.045 &  &     & 0.05  & 1  & 0.047  & 0.046  & 0.047 & 0.043 \\
				&       & 5  & 0.049  & 0.046  & 0.047 & 0.044 &  &     &       & 5  & 0.048  & 0.045  & 0.047 & 0.045 \\
				&       & 10 & 0.044  & 0.041  & 0.041 & 0.038 &  &     &       & 10 & 0.045  & 0.04   & 0.043 & 0.04  \\
				& 0.10  & 1  & 0.09   & 0.087  & 0.09  & 0.088 &  &     & 0.10  & 1  & 0.094  & 0.093  & 0.094 & 0.09  \\
				&       & 5  & 0.102  & 0.099  & 0.103 & 0.098 &  &     &       & 5  & 0.11   & 0.105  & 0.109 & 0.105 \\
				&       & 10 & 0.088  & 0.084  & 0.086 & 0.084 &  &     &       & 10 & 0.094  & 0.091  & 0.092 & 0.088 \\
				250 & 0.01  & 1  & 0.011  & 0.011  & 0.011 & 0.011 &  & 250 & 0.01  & 1  & 0.008  & 0.008  & 0.008 & 0.008 \\
				&       & 5  & 0.008  & 0.007  & 0.007 & 0.006 &  &     &       & 5  & 0.006  & 0.006  & 0.006 & 0.006 \\
				&       & 10 & 0.009  & 0.008  & 0.008 & 0.006 &  &     &       & 10 & 0.007  & 0.007  & 0.007 & 0.007 \\
				& 0.05  & 1  & 0.047  & 0.045  & 0.045 & 0.044 &  &     & 0.05  & 1  & 0.048  & 0.047  & 0.048 & 0.047 \\
				&       & 5  & 0.050   & 0.048  & 0.050  & 0.050  &  &     &       & 5  & 0.045  & 0.045  & 0.045 & 0.044 \\
				&       & 10 & 0.037  & 0.035  & 0.036 & 0.036 &  &     &       & 10 & 0.034  & 0.033  & 0.034 & 0.033 \\
				& 0.10  & 1  & 0.101  & 0.100    & 0.101 & 0.101 &  &     & 0.10  & 1  & 0.102  & 0.100    & 0.100   & 0.009 \\
				&       & 5  & 0.091  & 0.089  & 0.088 & 0.087 &  &     &       & 5  & 0.093  & 0.093  & 0.093 & 0.093 \\
				&       & 10 & 0.084  & 0.083  & 0.083 & 0.082 &  &     &       & 10 & 0.076  & 0.075  & 0.074 & 0.074 \\
				500 & 0.01  & 1  & 0.009  & 0.009  & 0.009 & 0.009 &  & 500 & 0.01  & 1  & 0.008  & 0.008  & 0.008 & 0.008 \\
				&       & 5  & 0.009  & 0.008  & 0.01  & 0.009 &  &     &       & 5  & 0.006  & 0.006  & 0.006 & 0.006 \\
				&       & 10 & 0.010   & 0.01   & 0.011 & 0.011 &  &     &       & 10 & 0.007  & 0.007  & 0.007 & 0.007 \\
				& 0.05  & 1  & 0.045  & 0.044  & 0.043 & 0.042 &  &     & 0.05  & 1  & 0.051  & 0.050   & 0.05  & 0.049 \\
				&       & 5  & 0.047  & 0.046  & 0.047 & 0.047 &  &     &       & 5  & 0.045  & 0.044  & 0.045 & 0.044 \\
				&       & 10 & 0.046  & 0.046  & 0.047 & 0.047 &  &     &       & 10 & 0.044  & 0.044  & 0.044 & 0.044 \\
				& 0.10  & 1  & 0.111  & 0.111  & 0.111 & 0.111 &  &     & 0.10  & 1  & 0.104  & 0.104  & 0.104 & 0.104 \\
				&       & 5  & 0.100    & 0.098  & 0.101 & 0.1   &  &     &       & 5  & 0.094  & 0.093  & 0.094 & 0.093 \\
				&       & 10 & 0.093  & 0.093  & 0.093 & 0.093 &  &     &       & 10 & 0.091  & 0.089  & 0.09  & 0.089 \\ \bottomrule
			\end{tabular}
		\end{adjustbox}
\end{table}
\section{Details of the estimation of the FSAR model}\label{app-FSAR}
The FSAR model was  introduced in \cite{horvath:2020:forward-curve}. The estimation method that we use is based on a least-squares principle and functional principal component analysis (FPCA). 
The FSAR model is defined as
\begin{equation}
	X_{n}(t)=\Phi_{\ell_1}\left(X_{n-\ell_{1}}\right)(t)+\cdots+\Phi_{\ell_k}\left(X_{n-\ell_{k}}\right)(t)+\varepsilon_{n}(t), 
\end{equation}
where each $\ell_i\in\{1,\dots,n\}$ and $\ell_i \ne \ell_j$ for $i\ne j$. Assuming the kernels $\phi_{\ell_i}$ and the observations  $X_i$'s can be well approximated by the first $p$ functional principal components, we then let
\[
\bX_i=\left(\left\langle X_i, \hat{v}_{1}\right\rangle, \ldots,\left\langle X_{i}, \hat{v}_p\right\rangle\right)^{\top} \in \mathbb{R}^{p}.
\]
and 
\begin{align*}
	\mathbf{X}_{L, p,i} &=  \begin{pmatrix}
		\bX_{i-\ell_1} \\
		\bX_{i-\ell_2} \\
		\vdots \\
		\bX_{i-\ell_k} 
	\end{pmatrix} =
	\begin{pmatrix}
		\inp{X_{i-\ell_1}}{\hat{v}_1} \\
		\inp{X_{i-\ell_1}}{\hat{v}_2} \\
		\vdots\\
		\inp{X_{i-\ell_k}}{\hat{v}_p}
	\end{pmatrix} \in \mathbb{R}^{kp}
\end{align*}
Then 
\begin{align*}
	&\bX_{L,p} = \begin{pmatrix}
		\mathbf{X}_{L, p,\ell_k+1}^\top &
		\mathbf{X}_{L, p,\ell_k+2}^\top &
		\cdots  &
		\mathbf{X}_{L, p, N}^\top  
	\end{pmatrix} \\
	&=\begin{pmatrix}
		\inp{X_{\ell_k+1-\ell_1}}{\hat{v}_1} &  \inp{X_{\ell_k+1-\ell_1}}{\hat{v}_2} & \cdots &
		\inp{X_1}{\hat{v}_1} &\cdots &\inp{X_1}{\hat{v}_p} \\
		\inp{X_{\ell_k+2-\ell_1}}{\hat{v}_1} &  \inp{X_{\ell_k+2-\ell_1}}{\hat{v}_2} & \cdots &
		\inp{X_2}{\hat{v}_1} &\cdots &\inp{X_2}{\hat{v}_p} \\
		\vdots \\
		\inp{X_{n-\ell_1}}{\hat{v}_1} &  \inp{X_{n-\ell_1}}{\hat{v}_2} & \cdots &
		\inp{X_{n-\ell_k}}{\hat{v}_1} &\cdots &\inp{X_{n-\ell_k}}{\hat{v}_p}
	\end{pmatrix}\\
	&\in \mathbb{R}^{(n - \ell_k)\times kp},\\
	&\bX_{R,p} = 
	\begin{pmatrix}
		\bX_{\ell_{k}+1}^\top \\
		\vdots \\
		\bX_n^\top
	\end{pmatrix} \\
	&= \begin{pmatrix}
		\inp{X_{\ell_k+1}}{\hat{v}_1} & \inp{X_{\ell_k+1}}{\hat{v}_2} &\dots  & & \inp{X_{\ell_k+1}}{\hat{v}_p} \\
		\vdots \\
		\inp{X_n}{\hat{v}_1} & \inp{X_n}{\hat{v}_2} &\dots  & & \inp{X_n}{\hat{v}_p} \\
	\end{pmatrix} \in \mathbb{R}^{(n-\ell_k)\times p}.
\end{align*}
The least-squares estimator is:
\[
\widehat{\boldsymbol{\Phi}}=\left(\mathbf{X}_{L, p}^{\top} \mathbf{X}_{L, p}\right)^{-1} \mathbf{X}_{L, p}^{\top} \mathbf{X}_{R, p}
=:\left(\begin{array}{c}
	\hat{\boldsymbol{\Phi}}_{\ell_1} \\
	\vdots \\
	\hat{\boldsymbol{\Phi}}_{\ell_k}
\end{array}\right) \in \mathbb{R}^{k p \times p}
\]
Then the estimates of the kernel function $\psi_{\ell_i}(t,s)$ is 
\[
\hat{\phi}_{\ell_i}(t, s) \approx \sum_{j=1}^p\sum_{r=1}^p \hat{\boldsymbol{\Phi}}_{\ell_i}[j, r] \hat{v}_{j}(t) \hat{v}_{r}(s), \quad \hat{\boldsymbol{\Phi}}_{i} \in \mathbb{R}^{p \times p},
\]
where the square bracket $[\cdot, \cdot]$ shows the indices of the matrix.\\
It may be re-written in the matrix form as
\[
\hat{\phi}_{\ell_i}(t, s) \approx \begin{pmatrix}
	\hat{v}_1(t), & \hat{v}_2(t), & \cdots & \hat{v}_p(t)
\end{pmatrix}    
\hat{\boldsymbol{\Phi}}_{\ell_i}\begin{pmatrix}
	\hat{v}_1(s) \\ \hat{v}_2(s) \\ \vdots \\ \hat{v}_p(s)
\end{pmatrix}    . 
\]
The forecasts are then made by using the estimated kernel as 
\[
\hat{X}_i(t) = \int \hat{\phi}_{\ell_1}(t, s) X_{i-\ell_1}(s)ds +\cdots +\int \hat{\phi}_{\ell_k}(t, s) X_{i-\ell_k}(s)ds.
\]
\end{appendices}
\spacingset{1.00} 
\bibliographystyle{chicago}
\bibliography{fSACF}
\end{document}